\newcommand{\RS}{\mathrm{RS}}
\newcommand{\lop}{\ensuremath{\lambda_1^{(p)}}}
\newcommand{\fh}{\ensuremath{\hat{f}}}
\newcommand{\gh}{\ensuremath{\hat{g}}}
\newcommand{\B}{\mathcal{B}}
\newcommand{\latpar}{\lat^{\perp}}
\renewcommand{\C}{\mathcal{C}}
\DeclareMathOperator{\Ima}{Im}
\title{Hardness of the (Approximate) Shortest Vector Problem:\\
A Simple Proof via Reed-Solomon Codes}
\author{Huck Bennett\thanks{Oregon State University, \email{huck.bennett@oregonstate.edu}. Part of this work was completed while the author was at the University of Michigan and supported by NSF Grant No.~CCF-2006857. The views
expressed are those of the authors and do not necessarily reflect the official policy or position of the National Science Foundation.}
\and
Chris Peikert\thanks{University of Michigan and Algorand, Inc., \email{cpeikert@umich.edu}. This author's work was supported by NSF Grant No.~CCF-2006857.}}
\date{\today}
\begin{document}

\maketitle
\listoffixmes

\begin{abstract}
We give a \emph{simple} proof that the (approximate, decisional) Shortest Vector Problem is $\NP$-hard under a randomized reduction.
Specifically, we show that for any $p \geq 1$ and any constant $\gamma < 2^{1/p}$, the $\gamma$-approximate problem in the $\ell_p$ norm ($\gamma$-$\GapSVP_p$) is not in $\RP$ unless $\NP \subseteq \RP$.
Our proof follows an approach pioneered by Ajtai (STOC 1998), and strengthened by Micciancio (FOCS 1998 and SICOMP 2000), for showing hardness of $\gamma$-$\GapSVP_p$ using \emph{locally dense lattices}.
We construct such lattices simply by applying ``Construction A'' to Reed-Solomon codes with suitable parameters, and prove their local density via an elementary argument originally used in the context of Craig lattices.

As in all known $\NP$-hardness results for $\GapSVP_p$ with $p < \infty$, our reduction uses randomness. Indeed, it is a notorious open problem to prove $\NP$-hardness via a deterministic reduction.
To this end, we additionally discuss potential directions and associated challenges for derandomizing our reduction.
In particular, we show that a close deterministic analogue of our local density construction would improve on the state-of-the-art explicit Reed-Solomon list-decoding lower bounds of Guruswami and Rudra (STOC~2005 and IEEE Transactions on Information Theory~2006).

As a related contribution of independent interest, we also give a polynomial-time algorithm for decoding $n$-dimensional ``Construction~A Reed-Solomon lattices'' (with different parameters than those used in our hardness proof) to a distance within an $O(\sqrt{\log n})$ factor of Minkowski's bound. This asymptotically matches the best known distance for decoding near Minkowski's bound, due to Mook and Peikert (IEEE Transactions on Information Theory 2022), whose work we build on with a somewhat simpler construction and analysis.
\end{abstract}
\newpage %

\section{Introduction}
\label{sec:intro}

\setlength{\epigraphwidth}{0.8\textwidth}

\epigraph{%
[I]t may easily happen that other, perhaps in some sense simpler, lattices also have the properties that are required from $L$ to complete the proof\ldots{}
There are different reasons which may motivate the search for such a lattice: to make the proof deterministic; to improve the factor in the approximation result; to make the proof simpler.}{
Mikl\'{o}s~Ajtai,~\cite[Remark~2]{DBLP:conf/stoc/Ajtai98}}

A \emph{lattice}~$\lat$ is the set of all integer linear combinations of some~$n$ linearly independent vectors $\vec{b}_1, \ldots, \vec{b}_n \in \R^m$. The matrix $B = (\vec{b}_1, \ldots, \vec{b}_n)$ whose columns are these vectors is called a \emph{basis} of~$\lat$, and~$n$ is called its \emph{rank}.
Formally, the lattice~$\lat$ generated by~$B$ is defined as
\[
\lat = \lat(B) := \set[\Big]{ \sum_{i=1}^n a_i \vec{b}_i : a_1, \ldots, a_n \in \Z } \ \text{.}
\]
Lattices are classically studied mathematical objects, and have proved invaluable in many computer science applications, especially the design and analysis of cryptosystems.
Indeed, the area of lattice-based cryptography, which designs cryptosystems whose security is based on the apparent intractability of certain computational problems on lattices, has flourished over the past quarter century. (See~\cite{journals/fttcs/Peikert16} and its bibliography for a comprehensive summary and list of references.)

The central computational problem on lattices is the Shortest Vector Problem ($\SVP$): given a lattice basis~$B$ as input, the goal is to find a shortest non-zero vector in $\lat(B)$.
This paper is concerned with its $\gamma$-approximate decision version in the~$\ell_p$ norm ($\gamma$-$\GapSVP_p$), where $p \geq 1$ is fixed and the approximation factor $\gamma = \gamma(n) \geq 1$ is some function of the lattice rank~$n$ (often a constant).
Here the input additionally includes a distance threshold $s > 0$, and the goal is to determine whether the length (in the~$\ell_p$ norm) $\lop(\lat) := \min_{\vec{v} \in \lat \setminus \set{\vec{0}}} \norm{\vec{v}}_p$ of the shortest non-zero vector in~$\lat$ is at most~$s$, or is strictly greater than $\gamma s$, when one of the two cases is promised to hold.
For the exact problem, where $\gamma = 1$, we often simply write $\GapSVP_p$.

Motivated especially by its central role in the security of lattice-based cryptography, understanding the complexity of $\gamma$-$\GapSVP$ has been the subject of a long line of work.
In an early technical report, van Emde Boas~\cite{vanEmdeBoas81} initiated the study of the hardness of lattice problems more generally, and in particular showed that $\GapSVP_{\infty}$ is $\NP$-hard.
Seventeen years later, Ajtai~\cite{DBLP:conf/stoc/Ajtai98} finally showed similar hardness for the important Euclidean case of $p = 2$,
i.e., he showed that exact $\GapSVP_2$ is $\NP$-hard, though under a \emph{randomized} reduction.
Subsequent work~\cite{conf/coco/CaiN98,journals/siamcomp/Micciancio00,journals/jcss/Khot06,journals/jacm/Khot05,journals/toc/HavivR12,journals/toc/Micciancio12} improved this by showing that $\gamma$-$\GapSVP_p$ in any~$\ell_p$ norm is $\NP$-hard to approximate for any constant $\gamma \geq 1$, and hard for nearly polynomial factors $\gamma = n^{\Omega(1/\log \log n)}$ assuming stronger complexity assumptions, also using randomized reductions.
Recent work~\cite{conf/stoc/AggarwalS18,conf/innovations/BennettPT22} has also shown the \emph{fine-grained} hardness of $\gamma$-$\GapSVP_p$ for small constants~$\gamma$ (again under randomized reductions).
On the other hand, $\gamma$-$\GapSVP_p$ for finite~$p \geq 2$ is unlikely to be $\NP$-hard for approximation factors $\gamma \geq C_p \sqrt{n}$ (where~$C_p$ is a constant depending only on~$p$)~\cite{journals/jcss/GoldreichG00,journals/jacm/AharonovR05,peikert08:_limit_hardn_of_lattic_probl}, and the security of lattice-based cryptography relies on the conjectured hardness of $\GapSVP$ or other problems for even larger (but typically polynomial) factors.

While this line of work has been very successful in showing progressively stronger hardness of approximation and fine-grained hardness for $\gamma$-$\GapSVP_p$, it leaves some other important issues unresolved.
First, for $p \neq \infty$ the hardness reductions and their analysis are rather complicated, and second, they are randomized.
Indeed, it is a notorious, long-standing open problem to prove that $\GapSVP_p$ is $\NP$-hard, even in its exact form, under a deterministic reduction for some finite~$p$.
While there have been some potential steps in this direction~\cite{journals/siamcomp/Micciancio00,journals/toc/Micciancio12}, e.g., using plausible number-theoretic conjectures that appear very hard to prove, there has been no new progress on this front for a decade.

\subsection{Our Contributions}
\label{sec:contributions}

The primary contribution of this work is to give a substantially simpler proof that $\gamma$-$\GapSVP_p$ is $\NP$-hard under a randomized reduction, for any $p \geq 1$ and constant $\gamma < 2^{1/p}$.
The heart of our reduction is a family of ``gadget'' lattices~$\lat$ derived from Reed-Solomon codes $\C \subseteq \F_q^n$ (for prime~$q$) via the very natural ``Construction~A''~\cite{conway1999sphere}, which simply sets $\lat = \C + q \Z^n$.
These lattices also are closely related to a family studied by Craig~\cite{Craig}, and we take advantage of this similarity in our analysis (see \cref{sec:related} for details).

\begin{theorem}[{Hardness of $\gamma$-$\GapSVP_p$}]
\label{thm:svp-hardness}
For any $p \geq 1$ and constant~$\gamma$ satisfying $1 \leq \gamma < 2^{1/p}$, $\gamma$-$\GapSVP_p$ is not in $\RP$ unless $\NP \subseteq \RP$.
\end{theorem}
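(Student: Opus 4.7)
The plan is to follow the Ajtai--Micciancio framework for proving $\NP$-hardness of $\gamma$-$\GapSVP_p$ via \emph{locally dense lattices}. Recall that a lattice $\lat^*$ (equipped with a center $\vec{t}^*$ in its ambient space) is $\gamma$-locally dense if there are exponentially many lattice points $\vec{v} \in \lat^*$ satisfying $\norm{\vec{v} - \vec{t}^*}_p \leq \gamma \cdot \lop(\lat^*)$. Given such a gadget, one reduces a known $\NP$-hard problem (e.g., $\GapCVP_p$ at a suitable constant factor) to $\gamma$-$\GapSVP_p$ by constructing an augmented lattice that combines the source instance with $\lat^*$ via a random embedding, in such a way that short non-zero vectors of the combined lattice correspond to YES instances of the source problem. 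This part of the reduction is standard; the novel content is the gadget construction.

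First I would construct the gadget. Let $q$ be a prime, let $\alpha_1, \ldots, \alpha_n$ be distinct points in $\F_q$, let $\C \subseteq \F_q^n$ be the Reed--Solomon code of dimension $k$ (evaluations of polynomials in $\F_q[X]$ of degree less than $k$ at the $\alpha_i$), and define $\lat^* := \C + q\Z^n \subseteq \Z^n$ via Construction~A (using a symmetric representative of $\F_q$). The code is MDS with minimum Hamming distance $d = n - k + 1$, and a routine analysis yields $\lop(\lat^*) \geq \min\set{d^{1/p}, q}$: any non-zero $\vec{v} \in \lat^*$ either reduces modulo~$q$ to a non-zero codeword (whose support has size at least $d$ and whose non-zero entries have magnitude at least~$1$) or lies in $q\Z^n$ (so some entry has magnitude at least~$q$). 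Choosing $q \geq d^{1/p}$ then gives $\lop(\lat^*) = d^{1/p}$.

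The main obstacle is establishing local density with the desired ratio. I would follow the elementary Craig-lattice-style counting argument to exhibit a target~$\vec{t}^*$ and an exponentially large set of lattice points $\vec{v} \in \lat^*$ with $\norm{\vec{v} - \vec{t}^*}_p \leq \gamma \cdot d^{1/p}$. The natural starting point sets $k$ roughly proportional to~$n$ and picks $\vec{t}^*$ with all coordinates in $\set{0, 1/2}$: for example, with $k \approx n/2$ and $\vec{t}^* = (1/2, \ldots, 1/2)$, any lattice vector with all entries in $\set{0,1}$ lies at $\ell_p$ distance $n^{1/p}/2 = 2^{1/p-1} \cdot d^{1/p}$, comfortably below $2^{1/p} \cdot d^{1/p}$. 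The delicate part is producing \emph{enough} such vectors---at least $2^{\Omega(n)}$ of them---and for any desired ratio $\gamma < 2^{1/p}$ (which, as $\gamma \to 2^{1/p}$, requires tuning $k/n$ together with the target). This is where the MDS interpolation property and a direct combinatorial count over sign patterns enter, sidestepping the algebraic complications of explicit constructions (which would instead yield a deterministic reduction, cf.~the paper's discussion of Guruswami--Rudra list-decoding bounds).

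Finally, plugging the locally dense gadget into the Ajtai--Micciancio augmentation and applying standard amplification of the inverse-polynomial success probability gives the claimed $\RP$ reduction. The randomness is inherent to the embedding, which is why we obtain a randomized rather than deterministic reduction. Combined with the known $\NP$-hardness of $\GapCVP_p$ at constant factors, this yields $\gamma$-$\GapSVP_p \notin \RP$ unless $\NP \subseteq \RP$.
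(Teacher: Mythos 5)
Your high-level plan (Ajtai--Micciancio framework, Construction~A applied to Reed--Solomon codes, pigeonhole a dense coset, plug into a standard CVP-to-SVP reduction) correctly matches the paper's architecture, but two central steps in your gadget construction have genuine gaps.

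\paragraph{The minimum distance bound is too weak.}
You bound $\lop(\lat^*)$ using only the trivial MDS fact that the code has Hamming distance $d=n-k+1$, giving $\lambda_1^{(1)}(\lat^*)\geq d$. The paper's entire leverage comes from improving this by (nearly) a factor of two: \cref{thm:rs-min-dist} shows that for the parity-check lattice $\latpar(H_q(k,S))$ with $k\leq\card{S}/2$, one has $\lambda_1^{(1)}\geq 2k$, essentially doubling the trivial bound $\lambda_1^{(1)}\geq k+1$. The proof is the short Newton's-identities / Craig-lattice argument (\cref{prop:power-sums-root-polynomials}), and this is where the $2^{1/p}$ in the theorem comes from. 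Without it you cannot push the relative density factor below $\alpha\approx 1$ for $p$ close to $1$: your own calculation yields $\alpha=2^{1/p-1}$, which equals $1$ at $p=1$ and so gives no gap at all there, and only matches the paper's $\alpha=2^{-1/p}$ at $p=2$.

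\paragraph{The counting argument does not work with your target and parameters.}
You take $k\approx n/2$, a fixed half-integer target $\vec{t}^*=(1/2,\ldots,1/2)$, and propose to count $\{0,1\}^n$ lattice vectors near $\vec{t}^*$. But a $\{0,1\}$-valued vector lies in $\lat^*=\C+q\Z^n$ iff it is a $\{0,1\}$-valued \emph{codeword} of the Reed--Solomon code itself, and there is no reason (and it is in fact false) that a dimension-$\Theta(n)$ Reed--Solomon code has $2^{\Omega(n)}$ binary codewords; producing even super-polynomially many such codewords would already be a dramatic list-decoding result. The paper avoids this by (i) choosing $k$ to be a small polynomial in $q$ (e.g.\ $k=q^{\eps}$, $n=q$), so that the lattice has only $q^k=q^{q^{\eps}}$ cosets, and (ii) counting binary vectors in an \emph{integer coset} $\vec{x}+\lat$ via pigeonhole: some coset (and, by \cref{lem:coset-sampling}, a random one with good probability) contains at least $\binom{q}{h}/q^k$ weight-$h$ binary vectors. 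With your parameters $k\approx n/2$ and $q\geq n$, the corresponding ratio $\binom{n}{h}/q^{\Theta(n)}$ vanishes rather than growing, so no pigeonhole or counting argument of this form can succeed. Finally, even after fixing these issues, a complete proof still needs a linear map $T$ projecting the short coset vectors onto $\bit^r$ (item~\ref{item:cover-hypercube} of \cref{def:locally-dense}), which the paper obtains via the probabilistic Sauer's lemma (\cref{thm:prob-sauer}); your proposal does not address this.
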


We note that \cref{thm:svp-hardness} is actually identical to the main result in~\cite{journals/siamcomp/Micciancio00}.
As such, it matches the best known $\NP$-hardness of approximation for $\gamma$-$\GapSVP_p$ (i.e., largest $\gamma$) achieved by a ``one-shot'' reduction for all sufficiently small~$p$, including $p = 2$.
By ``one-shot,'' we mean that the reduction does not amplify the approximation factor from an initial fixed constant to an arbitrary constant (or more) via tensoring, as is done in~\cite{journals/jacm/Khot05,journals/toc/HavivR12,journals/toc/Micciancio12}.
(It is an interesting question whether our hard $\gamma$-$\GapSVP_p$ instances are amenable to tensoring; see \cref{sec:open-questions}.)

Although our reduction still uses randomness, we believe that it may be easier to derandomize than previous reductions, both due to its simplicity, and because of its close connection to prior work showing hardness of minimum distance problem on codes via a deterministic reduction~\cite{journals/tit/ChengW12}.
To that end, in \cref{sec:derand} we also describe two approaches to potentially derandomizing our reduction, both of which aim to deterministically construct a particular lattice coset and lower bound the number of short vectors in it (see \cref{sec:techniques} for the motivation for this).
The first approach is based on Fourier analysis, using similar techniques to those in~\cite{journals/tit/ChengW12}, and the second is based on ``smooth'' proxies for point-counting functions.

We also show that a close deterministic analog of our randomized local-density construction would imply improved explicit Reed-Solomon list-decoding lower bounds, going beyond the current state of the art from~\cite{DBLP:journals/tit/GuruswamiR06}.
One may interpret this implication either pessimistically, as a barrier to a very strong derandomization of our reduction, or optimistically, as a potential route to improve Reed-Solomon list-decoding lower bounds.
Here there is a further connection between the two problems, in that~\cite{DBLP:journals/tit/GuruswamiR06} obtains its list-decoding lower bounds by using the same Fourier-analytic tool underlying one of our derandomization attempts---specifically, the Weil bound for character sums (\cref{eq:weil-additive}).
Unfortunately, the Weil bound falls just short of what we need in our context.
(The Weil bound and related techniques were first used for counting Reed-Solomon code words in~\cite{conf/focs/ChengW04}, and were also used in the deterministic hardness reduction for the minimum distance problem on codes in~\cite{journals/tit/ChengW12}.)

\paragraph{Efficient decoding near Minkowski's bound.}

As a separate contribution of independent interest, in \cref{sec:decoding} we give a polynomial-time algorithm for decoding ``Construction~A Reed-Solomon lattices'' of rank~$n$---the same family of lattices as in our hardness reduction, but instantiated with different parameters---to a distance within a $O(\sqrt{\log n})$ factor of Minkowski's bound.\footnote{Minkowski's bound gives an upper bound on the ``normalized density'' of a lattice $\lat$.
Specifically, it asserts that $\lambda_1^{(2)}(\lat) \leq \sqrt{n} \cdot \det(\lat)^{1/n}$ for all rank-$n$ lattices $\lat$, where $\det(\lat) = \sqrt{\det(B^T B)}$ for any basis~$B$ of~$\lat$.}
The $O(\sqrt{\log n})$ factor in our result asymptotically matches the best factor known from prior work~\cite{DBLP:journals/tit/MookP22}, which is for a different family of lattices.
In fact, we rely on one of the main underlying theorems from that work, but give a simpler construction and analysis based on individual Reed-Solomon codes instead of towers of BCH codes.

Let $\RS_q[k, S]$ denote the dimension-$k$ Reed-Solomon code over~$\F_q$ with evaluation set~$S$ (defined below in \cref{eq:intro-RS-def}).
Note that $n = q$ is the rank of the lattice~$\lat$ in the following theorem.

\begin{theorem}[{Decoding near Minkowski's bound, informal}] \label{thm:informal-decoding}
Let~$q$ be prime and let $k := \floor{ q/(2 \log_2 q) } \leq q/2$.
Then for the ``Construction~A Reed-Solomon'' lattice $\lat := \RS_q[q - k, \F_q] + q \Z^q \subseteq \Z^q$:

\begin{enumerate}
\item \label{item:informal-high-density} We have $\Omega(\sqrt{q/\log q}) \leq \lambda_1(\lat) \leq \sqrt{q} \cdot \det(\lat)^{1/q} \leq O(\sqrt{q})$, i.e., the minimum distance is within a $O(\sqrt{\log q})$ factor of Minkowski's bound.

\item \label{item:informal-efficient-decoding} 
There is an algorithm that, on input~$q$ and a vector $\vec{y} \in \R^q$, outputs all lattice vectors $\vec{v} \in \lat$ satisfying $\norm{\vec{y} - \vec{v}} \leq C \sqrt{k} \approx C \sqrt{q/(2 \log_2 q)}$ in time polynomial in~$q$, for some universal constant $C > 0$.
\end{enumerate}
\end{theorem}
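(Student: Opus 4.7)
My plan splits into two parts.

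\emph{Density of the lattice (Item~\ref{item:informal-high-density}).}
The determinant is immediate: since $\lat/q\Z^q \cong \RS_q[q-k, \F_q]$ has $q^{q-k}$ elements, we get $\det(\lat) = q^q/q^{q-k} = q^k$, and so $\det(\lat)^{1/q} = q^{k/q} \leq \sqrt{2}$ using $k \leq q/(2 \log_2 q)$ (which gives $(k/q) \log_2 q \leq 1/2$). Minkowski's theorem then yields $\lambda_1(\lat) \leq \sqrt{q} \cdot \det(\lat)^{1/q} = O(\sqrt{q})$. For the matching lower bound, write any nonzero $\vec{v} \in \lat$ as $\vec{c} + q\vec{z}$ with $\vec{c} \in \RS_q[q-k, \F_q]$ and $\vec{z} \in \Z^q$. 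If $\vec{c} = \vec{0}$, then $\|\vec{v}\|_2 \geq q$. Otherwise, the Singleton bound (tight for Reed-Solomon) guarantees that $\vec{c}$ has at least $k+1$ nonzero entries over $\F_q$, each of which contributes a nonzero integer to the corresponding coordinate of $\vec{v}$. Hence $\|\vec{v}\|_2 \geq \sqrt{k+1} = \Omega(\sqrt{q/\log q})$.

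\emph{Decoding algorithm (Item~\ref{item:informal-efficient-decoding}).}
On input $\vec{y} \in \R^q$, the algorithm (i) constructs a soft received word for $\RS_q[q-k, \F_q]$ from $\vec{y}$, where the cost of assigning $c_i = c$ (for coordinate $i$ and $c \in \F_q$) is the squared distance from $y_i$ to the nearest integer congruent to $c \bmod q$; (ii) invokes a list-decoding subroutine to obtain a polynomial-size list of candidate codewords $\vec{c} \in \RS_q[q-k, \F_q]$; (iii) for each candidate, computes the coordinate-wise closest lattice vector $\vec{v}_{\vec{c}} \in \vec{c} + q\Z^q$ to $\vec{y}$ (for each $i$, pick the unique $v_i \equiv c_i \pmod q$ minimizing $|y_i - v_i|$); and (iv) outputs those $\vec{v}_{\vec{c}}$ satisfying $\|\vec{y} - \vec{v}_{\vec{c}}\|_2 \leq C\sqrt{k}$.

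Correctness requires that every target $\vec{v} = \vec{c} + q\vec{z} \in \lat$ at $\ell_2$-distance at most $C\sqrt{k}$ from $\vec{y}$ has its codeword component $\vec{c}$ in the returned list, and the main obstacle is that hard-decision list decoding is insufficient at the desired radius: a Markov bound shows only that the hard-rounded received word can disagree with $\vec{c}$ in $\Theta(k)$ Hamming positions, which for any reasonably large constant $C$ exceeds the Guruswami-Sudan polynomial-time list-decoding radius of $\approx k/2$ at our rate. To bridge this gap, the plan is to invoke as a black box one of the main underlying theorems of Mook-Peikert~\cite{DBLP:journals/tit/MookP22}, which provides the necessary polynomial-time soft-decision list-decoding guarantee for Reed-Solomon codes. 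The key technical step is to set up the per-coordinate cost function so that the $\ell_2$ bound $\|\vec{y} - \vec{v}\|_2^2 \leq C^2 k$ maps directly onto the hypothesis of that theorem, at which point the universal constant $C$ and the polynomial bound on the list size are read off from its statement; the remaining steps (lifting and distance filtering) are routine.
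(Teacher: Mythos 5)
Your proof is correct and follows essentially the same route as the paper: determinant computation for the upper bound, a minimum-distance lower bound for the density claim, and reduction mod $q$ plus the Mook--Peikert soft-decision Reed--Solomon list decoder (lifted back to the lattice) for the algorithm. The one place you deviate from the paper's written proof is the minimum-distance lower bound: you invoke only the Singleton bound on the Hamming distance of $\RS_q[q-k,\F_q]$, giving $\lambda_1(\lat) \geq \sqrt{k+1}$, whereas the paper's Corollary 4.2 cites Theorem 3.1 (the Newton's-identities / Craig-lattice argument) to get the stronger $\lambda_1(\lat) \geq \sqrt{2k}$. Both yield $\Omega(\sqrt{q/\log q})$ as required, and the paper's own remark after Corollary 4.2 explicitly acknowledges that the Hamming-distance route suffices here and that Theorem 3.1 is not essential for this result; your choice is the more elementary one. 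One small inaccuracy: the informal statement only asserts decoding for \emph{some} universal constant $C>0$, and hard-decision rounding followed by Guruswami--Sudan already achieves $C$ around $1/\sqrt{8}$ (at most $4r^2$ coordinates can be rounded incorrectly, and the GS radius at this rate is $\approx k/2$), so the step where you argue hard decision ``is insufficient'' overstates the obstacle; Mook--Peikert is used for the sharper constant $\approx 1/\sqrt{2}$, not because hard decision fails outright.
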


This result adds to a separate line of work on efficient (list) decoding for various families of lattices~\cite{conf/isit/MicciancioN08,journals/cc/GrigorescuP17,journals/dcc/DucasP19,DBLP:journals/tit/MookP22}.
Recently, Ducas and van Woerden~\cite{cryptoeprint:2021:1332} further motivated this study by showing cryptographic applications of lattices that can be efficiently decoded near Minkowski's bound. (However, their application is most compelling when the minimum distances of both the lattice and its dual are close to Minkowski's bound, which is not the case in the present setting.)

\subsection{Technical Overview}
\label{sec:techniques}

Here we give an overview of the key new elements in the proof of our main hardness theorem (\cref{thm:svp-hardness}), which are the focus of \cref{sec:ldl-reed-solomon}.
For concision, we defer the technical aspects of our efficient decoding algorithm and derandomization attempts to \cref{sec:decoding,sec:derand}, respectively.

Besides using randomness, another common feature in nearly all prior hardness results for $\GapSVP_p$ is the use of \emph{locally dense lattices} as advice (the only exception being~\cite{journals/jcss/Khot06}).
Roughly speaking, a locally dense lattice for relative distance $\alpha \in (0, 1)$ in the~$\ell_p$ norm is a lattice~$\lat$ and a coset $\vec{x} + \lat$ (i.e., the lattice ``shifted by'' some vector~$\vec{x}$) such that there are at least subexponentially many (in the lattice rank) vectors $\vec{v} \in \vec{x} + \lat$ satisfying $\norm{\vec{v}}_p \leq \alpha \cdot \lop(\lat)$.
One may view such a coset $\vec{x} + \lat$ as a ``bad'' configuration for list-decoding~$\lat$ to within relative distance~$\alpha$ in the~$\ell_p$ norm, because there are many lattice vectors relatively close to~$-\vec{x}$.%
\footnote{For technical reasons, the formal definition of local density, given in \cref{def:locally-dense}, also requires a linear transform that maps the short vectors in $\vec{x}+\lat$ onto the set of all binary vectors of a given dimension.
Such a transform can be obtained by random sampling using a probabilistic version of Sauer's Lemma (see \cref{thm:prob-sauer}) that is now standard in this context~\cite{DBLP:conf/stoc/Ajtai98,journals/siamcomp/Micciancio00}.
Therefore, we defer further discussion of this issue to the main body.}

Prior works have obtained locally density from a variety of lattice families: the Schnorr-Adelman prime number lattices~\cite{DBLP:conf/stoc/Ajtai98,conf/coco/CaiN98,journals/siamcomp/Micciancio00}; a variant of Construction~A~\cite{journals/jacm/Khot05} and Construction~D~\cite{journals/toc/Micciancio12} applied to (towers of) BCH codes; and random sublattices of $\Z^n$ and lattices with exponential kissing number~\cite{conf/stoc/AggarwalS18,conf/innovations/BennettPT22}.
In this work, we give a simple construction of locally dense lattices from Reed-Solomon codes, as described below.

Our main reduction (\cref{thm:cvp-to-svp}) shows how to use a locally dense lattice for relative distance~$\alpha$ in the~$\ell_p$ norm to prove $\NP$-hardness (via a randomized reduction) of $\gamma$-$\GapSVP_p$ for any constant $\gamma > 1/\alpha$.
This reduction is very similar to those from prior works, so for the remainder of this section we focus on summarizing our new construction of locally dense lattices.

\paragraph{Locally dense lattices from Reed-Solomon codes.}

We start with some basic definitions and facts used in our construction. Recall that the Construction~A lattice obtained from a linear code $\C \subseteq \F_q^n$ for some prime~$q$ is defined as $\lat := \C + q \Z^n$, i.e., an integer vector $\vec{z} \in \Z^n$ is in the lattice if and only if $\vec{z} \bmod q$ is a code word.
In fact, it will often be convenient to work with an equivalent ``dual view'' of Construction~A lattices. 
Namely, if $H \in \F_q^{k \times n}$ is a parity-check matrix of a linear code $\C := \ker(H) \subseteq \F_q^n$ for prime $q$, then the \emph{parity-check lattice} $\latpar(H)$ obtained from~$H$ is defined as
\begin{equation}
\label{eq:intro-parity-check-lattice}
\latpar(H) := \set{\vec{z} \in \Z^n : H\vec{z} = \vec{0} \in \F_q^k}
= \ker(H) + q \Z^n = \C + q \Z^n \  \text{.}
\end{equation}
Such lattices have determinant $\det(\latpar(H)) = \abs{\Z^n/\latpar(H)} \leq q^k$, with equality exactly when~$H$ has full row rank (see \cref{lem:parity-lat-properties}).

We next define the family of parity-check matrices $H = H_{q}(k, S)$ that we use to construct our family of locally dense lattices.
Such a matrix is parameterized by a prime~$q$, a positive integer~$k$, and a set $S \subseteq \F_q$.
Letting $s_0, \ldots, s_{n-1}$ be the elements of $S$ in some arbitrary order, we define
\begin{equation}
\label{eq:parity-check-intro}
H = H_{q}(k, S) := \begin{pmatrix*}[l]
1 & 1 & 1 & \cdots & 1 \\
s_0 & s_1 & s_2 & \cdots & s_{n-1} \\
s_0^2 & s_1^2 & s_2^2 & \cdots & s_{n-1}^2 \\
\vdots & \vdots & \vdots & \ddots & \vdots \\
s_0^{k-1} & s_1^{k-1} & s_2^{k-1} & \cdots & s_{n-1}^{k-1}
\end{pmatrix*} \in \F_q^{k \times n} \ \text{.}
\end{equation}
That is, $H_q(k, S)$ is the transposed Vandermonde matrix whose $(i, s)$th entry is $s^i$, where for convenience we index the rows and columns of $H_q(k, S)$ by $i \in \set{0, \ldots, k - 1}$ and $s \in S$, respectively, and define $0^0 := 1$.

The matrix $H = H_q(k, S)$ defined in \cref{eq:parity-check-intro} is a \emph{generator matrix} of the dimension-$k$ Reed-Solomon code
\begin{equation}
    \label{eq:intro-RS-def}
    \RS_q[k,S] := \set{ (p(s))_{s \in  S} : p \in \F_q[x], \deg(p) < k}
\end{equation}
over~$\F_q$ with evaluation set~$S$, and hence is a parity-check matrix of its dual code, which is a so-called \emph{generalized} Reed-Solomon~(GRS) code (see~\cite[Theorem~5.1.6]{hall-coding}).
Moreover, in the special case where $S = \F_q$, it turns out that $H_q(k, S)$ is a parity-check matrix for the (ordinary) Reed-Solomon code $\RS_q[q - k, \F_q]$ of dimension $q - k$ with evaluation set $S = \F_q$. So, $\latpar(H_q(k, \F_q)) = \RS_q[q - k, \F_q] + q \Z^q$ is the Construction~A lattice corresponding to the Reed-Solomon code $\RS_q[q - k, \F_q]$.
For simplicity, in this overview we restrict to these ``Construction~A Reed-Solomon'' lattices by taking $S = \F_q$, but note that our results generalize to any sufficiently large $S \subseteq \F_q$.

It is easy to see that for $k < q$, the GRS code having parity-check matrix~$H$ has minimum distance (in the Hamming metric) $k+1$: any~$k$ columns of~$H$ are linearly independent, because they form a transposed Vandermonde matrix, while any $k+1$ obviously are not.
Therefore, the corresponding Construction~A lattice $\lat := \ker(H) + q \Z^q$ has minimum distance $\lambda_1^{(1)}(\lat) \geq k+1$ in the~$\ell_1$ norm.
The key to our local density construction and its $\alpha \approx 1/2^{1/p}$ relative distance is that the~$\ell_1$ minimum distance is in fact almost \emph{twice} this large (at least): in \cref{thm:rs-min-dist} we show that $\lambda_1^{(1)}(\lat) \geq 2k$ when $k \leq q/2$.
The proof is short and elementary, proceeding via Newton's identities, and is closely related to the analysis of Craig lattices~\cite{Craig} (see \cref{sec:related}).

\paragraph{Obtaining a dense coset.}

Because the determinant of~$\lat$ (i.e., the number of its integer cosets) is~$q^k$, the pigeonhole principle immediately implies that there exists an integer coset of~$\lat$ containing at least $\binom{q}{h} / q^k$ \emph{binary} vectors in $\bit^q$ with Hamming weight~$h$, which have~$\ell_1$ norm~$h$.
By setting parameters appropriately, this yields a coset with subexponentially many vectors of~$\ell_1$ norm at most $\alpha \cdot \lambda_1^{(1)}(\lat)$, for any constant $\alpha > 1/2$.

More specifically, set $h := \alpha \cdot (2k) \leq \alpha \cdot \lambda_1^{(1)}(\lat)$ and $q \approx k^{1/\eps}$ for some positive constant $\eps < 1-1/(2\alpha)$.
(For simplicity, assume that~$h$ is an integer.)
Then there must exist an integer coset of~$\lat$ containing at least
\begin{equation}
\label{eq:intro-pigeonhole}
\frac{\binom{q}{h}}{q^k} \geq \Big(\frac{q}{h}\Big)^{h} \cdot q^{-k} 
= \frac{q^{(2\alpha - 1)k}}{(2 \alpha k)^{2 \alpha k}} \approx \frac{q^{(2 \alpha - 1)k}}{q^{2 \eps \alpha k}} = q^{(2 (1 - \eps) \alpha - 1)k} = q^{\Omega(k)} = q^{\Omega(q^\eps)}
\end{equation}
weight-$h$ binary vectors, which is subexponentially large in~$q$.

The above shows the \emph{existence} of a suitable coset, but following previous works, it is straightforward to show that a \emph{randomly sampled} coset from a suitable distribution is likely to have enough short vectors (see \cref{lem:coset-sampling}).
Indeed, the difference between showing that such a coset exists, versus sampling one efficiently, versus deterministically computing one efficiently, is the main technical difference between getting a non-uniform, versus randomized, versus deterministic hardness reduction (respectively) for $\GapSVP_p$ using these techniques.

The above argument generalizes to arbitrary~$\ell_p$ norms for finite~$p$, albeit for larger relative distances~$\alpha > 1/2^{1/p}$.
Because~$\lat$ is integral, $\lambda_1^{(1)}(\lat) \geq 2k$ implies that $\lop(\lat) \geq (2k)^{1/p}$ for any finite $p \geq 1$.
Moreover, reparameterizing the calculation in \cref{eq:intro-pigeonhole} by choosing $\alpha > 1/2^{1/p}$ and setting $h := \alpha^p \cdot (2 k)$ shows that some coset of~$\lat$ contains subexponentially many binary vectors of Hamming weight~$h$, and hence of~$\ell_p$ norm $h^{1/p} = \alpha \cdot (2k)^{1/p} \leq \alpha \cdot \lop(\lat)$.
Therefore, this construction yields locally dense lattices in the~$\ell_p$ norm for any constant relative distance $\alpha > 1/2^{1/p}$, which by our main reduction implies \cref{thm:svp-hardness}, i.e., randomized $\NP$-hardness of $\gamma$-$\GapSVP_p$ for any constant $\gamma < 2^{1/p}$.

\subsection{Additional Related Work}
\label{sec:related}

The Construction~A Reed-Solomon lattices we use are closely related to a family of algebraic lattices studied by Craig~\cite{Craig} (see also~\cite[Chapter~8, Section~6]{conway1999sphere}), and our proof of \cref{thm:rs-min-dist} is inspired by and very similar to an argument lower bounding the minimum distance of Craig lattices from~\cite[Chapter~8, Section~6, Theorem~7]{conway1999sphere}.\footnote{Specifically,~\cite{conway1999sphere} considers Craig lattices obtained from the coefficient vectors of polynomials in principal ideals of the form $(x - 1)^m R$ in rings of the form $R = \Z[x]/(x^p - 1)$, for some prime~$p$ and integer $m \geq 1$. The original definition of~\cite{Craig} is slightly different, and uses the ``canonical'' (Minkowski) embedding of such ideals in the ring of integers $R=\Z[x]/(\Phi_p(x))$ of the $p$th cyclotomic number field.}
In fact, the guarantees on the minimum distances and determinants of Craig lattices given there are such that our hardness proof for $\GapSVP$ could be made to work using Craig lattices instead of Construction~A Reed-Solomon lattices.
We have chosen to work with the latter family because they are more elementary and general, and because of their direct connections to  efficient lattice decoding, list-decoding lower bounds, and derandomization techniques, as explored in \cref{sec:decoding,sec:derand}.

We also note that locally dense lattices appear in other work on the complexity of lattice problems. In particular, all prior $\NP$- and fine-grained hardness reductions for the problem of Bounded Distance Decoding to relative distance~$\alpha'$ in the~$\ell_p$ norm ($\alpha'$-$\BDD_p$)~\cite{conf/approx/LiuLM06,conf/coco/BennettP20,conf/innovations/BennettPT22}, another important lattice problem, use locally dense lattices.
In fact, the smallest~$\alpha'$ for which we know how to show $\NP$-hardness of $\alpha'$-$\BDD_p$ is essentially the smallest~$\alpha(p)$ for which we can construct locally dense lattices for relative distance~$\alpha(p)$ in the~$\ell_p$ norm.

Additionally, \emph{locally dense codes} play an analogous role to locally dense lattices in the complexity of coding problems, and particularly the Minimum Distance Problem (MDP), which is the analog of $\SVP$ for codes. Indeed, the work of Micciancio, Dumer, and Sudan~\cite{journals/tit/DumerMS03} adapted the local density framework to codes and used it to show randomized hardness of approximation for MDP. This reduction was subsequently made deterministic by Cheng and Wan~\cite{journals/tit/ChengW12}, who, as mentioned above, constructed locally dense codes from Reed-Solomon codes.
Micciancio~\cite{conf/coco/Micciancio14} formalized the locally dense code framework, gave a simplified deterministic $\NP$-hardness proof for MDP, and discussed the search for analogous deterministic constructions of locally dense lattices.
(Austrin and Khot~\cite{journals/tit/AustrinK14} also gave a simplified deterministic hardness reduction for MDP, but it does not obviously use the same framework.)

Finally, with respect to our result in \cref{thm:informal-decoding}, we note that a number of previous works have given algorithms for efficient decoding near Minkowski's bound on other families of lattices.
These include~\cite{conf/isit/MicciancioN08,journals/cc/GrigorescuP17}, which showed how to (list) decode to a distance within a $O(n^{1/4})$ factor of Minkowski's bound on Barnes-Wall lattices; 
\cite{journals/dcc/DucasP19}, which showed how to decode to a distance within a $O(\log n)$ factor of Minkowski's bound on a family of discrete-logarithm lattices; and~\cite{DBLP:journals/tit/MookP22}, which showed how to (list) decode to a distance within a $O(\sqrt{\log n})$ factor of Minkowski's bound on lattices obtained by applying Construction~D to towers of BCH codes.

\subsection{Open Questions}
\label{sec:open-questions}

The obvious question left open by our work is whether our reduction can be derandomized, using the same family of lattices.
Addressing this is the main focus of \cref{sec:derand}, and we defer further discussion of it to there.

Another interesting question is whether it is possible to amplify the approximation factor~$\gamma$ achieved by our reduction via \emph{tensoring}, as is done in previous works~\cite{journals/jacm/Khot05,journals/toc/HavivR12,journals/toc/Micciancio12}.
Recall that the tensor product of lattices $\lat_1, \lat_2$ having respective bases $B_1, B_2$ is defined as $\lat_1 \otimes \lat_2 := \lat(B_1 \otimes B_2)$, and that $\lambda_1(\lat_1 \otimes \lat_2) \leq \lambda_1(\lat_1) \cdot \lambda_1(\lat_2)$.
To see why tensoring might be useful for gap amplification, observe that if this bound was an equality, then the $m$-fold tensor product $\lat^{\otimes m}$ of~$\lat$ with itself would satisfy $\lambda_1(\lat^{\otimes m}) = \lambda_1(\lat)^m$, which would amplify the approximation factor in an instance of $\gamma$-$\GapSVP_p$ to $\gamma^m$.
However, this inequality is \emph{not} tight in general (see, e.g.,~\cite[Lemma~2.3]{journals/toc/HavivR12}).
Nevertheless, prior reductions have been able to prove similar inequalities for certain families of lattices, but it is not immediately obvious how to adapt their techniques to our setting.
Finally, as previously observed in~\cite{conf/coco/BennettP20}, we do not have a good understanding of the smallest possible relative distance $\alpha^* = \alpha^*(p)$ for locally dense lattices in~$\ell_p$ norms for finite $p > 2$.
Besides showing better ``one-shot'' hardness for $\gamma$-$\GapSVP_p$, achieving smaller factors $\alpha(p)$ for such~$p$ (ideally, $\alpha(p) \approx \alpha^*(p)$) would imply improved (randomized) $\NP$-hardness results for Bounded Distance Decoding (specifically, it would show hardness of $\alpha'$-$\BDD_p$ for any $\alpha' > \alpha(p)$).
The triangle inequality implies a lower bound of $\alpha^* \geq 1/2$ in any norm, and, as observed in~\cite{journals/siamcomp/Micciancio00}, we have $\alpha^*(2) \geq 1/\sqrt{2}$.
So, the relative-distance factors $\alpha(p) \approx 1/2^{1/p}$ for local density achieved by~\cite{journals/siamcomp/Micciancio00} and this work are essentially optimal for $p = 1$ and $p = 2$ (and conjecturally for all $p \in [1, 2]$).
However, they deteriorate as~$p$ increases.
On the other hand, the factors $\alpha(p)$ obtained in~\cite{conf/coco/BennettP20} satisfy $\lim_{p \to \infty} \alpha(p) = 1/2$, and hence are nearly optimal for large~$p$, but are poor even for some $p > 2$.
In particular, they satisfy
$\alpha(p) < 1/\sqrt{2}$ only for $p > p_1 \approx 4.2773$, even though we believe that $\alpha(p) < 1/\sqrt{2}$ should be achievable for all $p > 2$.
(We note in passing that $\alpha(p) \approx 1/\sqrt{2}$ is achievable for \emph{all} $p$ via norm embeddings~\cite{conf/stoc/RegevR06}.)
It seems plausible that a family similar to our Construction~A Reed-Solomon lattices $\lat$---perhaps obtained by parameterizing the underlying Reed-Solomon code differently, using a different algebraic code, or by randomly sparsifying $\lat$ as in~\cite{conf/coco/BennettP20}---might achieve this.

\paragraph{Acknowledgements.}

We thank Swastik Kopparty~\cite{kopparty_personal20} for very helpful answers to several of our questions, and for pointing us to the contents of \cref{sec:binary-rs-list,sec:fourier}.
\section{Preliminaries}
\label{sec:prelims}

Throughout this work we adopt the convention that $0^0:=1$ in any ring.
For a positive integer~$k$, define $[k] := \set{0, 1, \ldots, k-1}$.

In general, every vector or matrix is indexed by some specified set~$S$. For example, $\vec{x} \in \Z^S$ is an integer vector indexed by~$S$, having an entry $x_s \in \Z$ for each $s \in S$ (and no other entries). When the index set is $[n]$ for some non-negative integer~$n$, we usually omit the brackets in the exponent and just write, e.g.,~$\Z^n$. We emphasize that in this case the indices start from zero. An object indexed by a finite set~$S$ of size $n=\card{S}$ can be reindexed by~$[n]$, simply by enumerating $S=\set{s_0, \ldots, s_{n-1}}$ under some arbitrary order, and identifying index~$s_i$ with index~$i$.

For a finite set~$S$ and a positive integer $h \leq \card{S}$, let $B_{S,h} := \set{\vec{v} \in \bit^S : \norm{\vec{v}}_1=h}$ be the set of binary vectors indexed by~$S$ of Hamming weight~$h$. As above, when $S=[n]$ we often write $B_{n,h}$.
Finally, let $\B_p^n(r) := \set{\vec{x} : \norm{\vec{x}}_p \leq r} \subset \R^n$ denote the real $n$-dimensional~$\ell_p$ ball of radius~$r$ centered at the origin.

\subsection{Basic Lattice Definitions}
\label{subsec:lattice-defs}

Given a lattice $\lat = \lat(B)$ with basis $B \in \R^{m \times n}$, we define the \emph{rank} of $\lat$ to be $n$ and the (ambient) \emph{dimension} of $\lat$ to be $m$.
We denote the \emph{minimum distance} of $\lat$ in the $\ell_p$ norm, which is the length of a shortest non-zero vector in $\lat$, by
\[
\lop(\lat) := \min_{\vec{x} \in \lat \setminus \set{0}} \norm{\vec{x}}_p \ \text{.}
\]
The central problem that we study in this work asks about the value of $\lop(\lat)$ for a given input lattice $\lat$.
\begin{definition} \label{def:gapsvp}
For $p \geq 1$ and $\gamma = \gamma(n) \geq 1$, the decisional, $\gamma$-approximate Shortest Vector Problem in the $\ell_p$ norm ($\gamma$-$\GapSVP_p$) is the promise problem defined as follows. The input consists of a basis $B \in \Z^{m \times n}$ of an integer lattice $\lat$ and a distance threshold $s > 0$, and the goal is to determine whether the input is a YES instance or a NO instance, where these are defined as follows:
\begin{itemize}
    \item YES instance: $\lop(\lat) \leq s$.
    \item NO instance: $\lop > \gamma s$.
\end{itemize}
\end{definition}

We define the \emph{determinant} of $\lat$ to be $\det(\lat) := \sqrt{\det(B^T B)}$, which is equal to $\abs{\det(B)}$ when $m = n$ (i.e., when $\lat$ is full-rank).
We note that determinant is well defined because, although lattice bases are not unique, they are equivalent up to multiplication on the right by unimodular matrices.

The density of a rank-$n$ lattice~$\lat$ is captured by its so-called \emph{root Hermite factor} $\lambda_1(\lat)/\det(\lat)^{1/n}$.\footnote{This ratio is the square root of the Hermite factor $\gamma(\lat) := (\lambda_1(\lat)/\det(\lat)^{1/n})^2$, which is defined in this way for historical reasons.}
The density of a lattice corresponds to its quality in various applications, including as the set of centers of a sphere packing and as an error-correcting code.
Minkowski's bound asserts that the root Hermite factor of such a rank-$n$ lattice is at most~$\sqrt{n}$, which is convenient to write in expanded form as
\begin{equation}
\label{eq:minkowski}
\lambda_1(\lat) \leq \sqrt{n} \cdot \det(\lat)^{1/n} \ \text{.}
\end{equation}

\subsection{Parity-Check Matrices and Lattices}
\label{sec:parity-check-matrices-lattices}

For a prime~$q$ and a matrix $H \in \F_q^{k \times n}$, we define the \emph{parity-check lattice} $\lat^{\perp}(H)$ obtained from~$H$ as
\begin{equation}
\label{eq:parity-check-lattice}
\latpar(H) := \set{\vec{z} \in \Z^n : H\vec{z} = \vec{0}} = \ker(H) + q \Z^n \  \text{.}
\end{equation}
Note that $\latpar(H)$ is simply the ``Construction A'' lattice~\cite[Chapter~5, Section~2]{conway1999sphere} of the linear error-correcting code~$\mathcal{C}$ having~$H$ as a parity-check matrix, i.e., $\mathcal{C} = \set{ \vec{c} \in \F_q^n : H \vec{c} = \vec{0}}$.
More generally, for any ``syndrome'' $\vec{u} \in \F_q^k$ we define
\[ \latpar_{\vec{u}}(H) := \set{\vec{x} \in \Z^n : H \vec{x} = \vec{u}} \ \text{.} \]
If there exists some $\vec{x} \in \Z^n$ such that $H \vec{x}=\vec{u}$, then it follows immediately that $\latpar_{\vec{u}}(H)$ is simply the lattice coset $\vec{x} + \latpar(H)$.
So, we can identify cosets of $\latpar(H)$ by their corresponding syndromes.
We recall some standard properties of parity-check lattices, and give a proof for self-containment.

\begin{lemma}%
\label{lem:parity-lat-properties}
Let $q$ be a prime, let~$k$ and~$n$ be positive integers, and let $H \in \F_q^{k \times n}$ be a parity-check matrix. Then the parity-check lattice $\lat = \latpar(H)$ has rank~$n$ and determinant $\det(\lat) \leq q^k$, with equality if and only if the rows of~$H$ are linearly independent.
\end{lemma}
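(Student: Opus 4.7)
The plan is to establish the rank assertion first, then convert the determinant claim into an index computation, and finally interpret that index through a linear-algebra surjection.

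For the rank, the point is simply that $\latpar(H) \supseteq q\Z^n$: any $\vec{z} \in q\Z^n$ satisfies $H\vec{z} = \vec{0} \in \F_q^k$ trivially. Since $q\Z^n$ is a rank-$n$ lattice in $\R^n$, the containing lattice $\latpar(H)$ is also of rank $n$.

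For the determinant, I would use the standard fact that whenever $\lat \subseteq \Z^n$ is a full-rank sublattice, $\det(\lat) = [\Z^n : \lat]$. Thus it suffices to count cosets of $\latpar(H)$ in $\Z^n$. Consider the group homomorphism $\phi : \Z^n \to \F_q^k$ defined by $\phi(\vec{z}) := H\vec{z} \bmod q$, where $H$ acts on $\vec{z}$ by its integer entries (any integer lifts of the $\F_q$ entries of $H$) and the result is reduced modulo $q$. By definition of $\latpar(H)$ in \cref{eq:parity-check-lattice}, $\ker(\phi) = \latpar(H)$, so the first isomorphism theorem gives
\[
\Z^n / \latpar(H) \;\cong\; \Ima(\phi) \;\leq\; \F_q^k \ \text{,}
\]
yielding $\det(\latpar(H)) = \card{\Z^n / \latpar(H)} = \card{\Ima(\phi)} \leq q^k$.

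For the equality condition, I would observe that $\phi$ factors through the canonical surjection $\Z^n \twoheadrightarrow \F_q^n$ followed by the $\F_q$-linear map $\bar{H} : \F_q^n \to \F_q^k$ induced by $H$. Since reduction mod $q$ is surjective, $\phi$ is surjective iff $\bar{H}$ is, iff $\bar{H}$ has rank $k$, iff the rows of $H$ are $\F_q$-linearly independent. Combined with the isomorphism above, this gives $\det(\latpar(H)) = q^k$ exactly in this case, completing the proof. No step here looks like a genuine obstacle; the argument is entirely bookkeeping with the first isomorphism theorem and the index-determinant identity for sublattices of $\Z^n$.
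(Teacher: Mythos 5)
Your proof is correct and follows essentially the same route as the paper's: both establish the rank via the containment $q\Z^n \subseteq \latpar(H) \subseteq \Z^n$, and both obtain the determinant bound by applying the first isomorphism theorem to the homomorphism $\vec{z} \mapsto H\vec{z}$ from $\Z^n$ to $\F_q^k$, with the equality case read off from surjectivity being equivalent to linear independence of the rows of $H$. The only difference is that you spell out the factorization through $\F_q^n$ a bit more explicitly, which is a minor elaboration rather than a different argument.
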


\begin{proof}
The first claim follows simply by noting that $q \Z^n \subseteq \latpar(H) \subseteq \Z^n$.
For the determinant, observe that the map $\vec{x} \mapsto H\vec{x}$ is an additive-group homomorphism from~$\Z^n$ to~$\F_q^k$, and that $\latpar(H)$ is its kernel by definition.
So, by the first isomorphism theorem, the map induces an isomorphism from the quotient group $\Z^n/\latpar(H)$ to the image $\Ima(H) = \set{H \vec{x} : \vec{x} \in \Z^n} \subseteq \F_q^k$, where the subset relation is an equality if and only if the rows of~$H$ are linearly independent.
The claim then follows from the fact that $\det(\latpar(H)) = \card{\Z^n/\latpar(H)} = \card{\Ima(H)}$.
\end{proof}

We next formally define the family of parity-check lattices that are at the heart of our construction of locally dense lattices.

\begin{definition}
\label{def:parity-check-matrix}
For a prime~$q$, positive integer~$k$, and set $S \subseteq \F_q$, define $H_{q}(k, S) \in \F_q^{k \times S}$ to be the matrix~$H$ whose rows and columns are respectively indexed by~$[k] = \set{0, 1, \ldots, k-1}$ and~$S$, and whose $(i,s)$th entry is \[ H_{i,s}:=s^i \in \F_q \ \text{.} \]
(Recall that $0^0:=1$.)
Equivalently, if we enumerate $S=\set{s_0, \ldots, s_{n-1}}$ in some arbitrary order, we have
\begin{equation}
\label{eq:parity-check}
H = H_{q}(k, S) := \begin{pmatrix*}[l]
1 & 1 & 1 & \cdots & 1 \\
s_0 & s_1 & s_2 & \cdots & s_{n-1} \\
s_0^2 & s_1^2 & s_2^2 & \cdots & s_{n-1}^2 \\
\vdots & \vdots & \vdots & \ddots & \vdots \\
s_0^{k-1} & s_1^{k-1} & s_2^{k-1} & \cdots & s_{n-1}^{k-1}
\end{pmatrix*} \in \F_q^{k \times n} \ \text{.}
\end{equation}
\end{definition}
Notice that~$H$ is a transposed Vandermonde matrix.
In particular, if $k \leq n$ then its rows are linearly independent, and so $\det(\latpar(H)) = q^k$ by \cref{lem:parity-lat-properties}.

We recall from the introduction that $H = H_q(k,S)$ is a generator matrix (of row vectors) of the dimension-$k$ Reed-Solomon code over~$\F_q$ with evaluation set~$S$, and hence~$H$ is a parity-check matrix of its dimension-$(n-k)$ dual code.
So, $\latpar(H)$ is the Construction A lattice of this dual code.
These dual codes are in fact \emph{generalized Reed-Solomon codes}, a family of codes that include Reed-Solomon codes as a special case and that are closed under taking duals (see~\cite[Theorem~5.1.6]{hall-coding}).
Moreover, in the special case where $S = \F_q$, the matrix $H = H_q(k, \F_q)$ is in fact a parity-check matrix of an (ordinary) Reed-Solomon code.
For our hardness proof it suffices to use this special case; i.e., we show $\NP$-hardness (under a randomized reduction) of $\GapSVP$ by using Construction A lattices of (ordinary) Reed-Solomon codes as gadgets.
See \cref{sec:decoding} and \cref{sec:binary-rs-list} for other connections between these lattices and Reed-Solomon codes.

\subsection{Symmetric Polynomials}
\label{sec:symmetric-polynomials}

A \emph{symmetric polynomial} $P(x_1, x_2, \ldots, x_m)$ is a polynomial that is invariant under any permutation of its variables, i.e., $P(x_1, \ldots, x_m) = P(x_{\pi(1)}, \ldots, x_{\pi(m)})$ as formal polynomials for all permutations~$\pi$ of $\set{1,2,\ldots,m}$. Because the order of the variables is immaterial, we usually just write a symmetric polynomial as $P(X)$, where $X=\set{x_1, \ldots, x_m}$ is the set of variables, and we write $P(T)$ for its evaluation on a multiset~$T$ of values.

We next recall two important symmetric polynomials and the relationship between them.
For a non-negative integer~$i$, the $i$th \emph{power sum} of a set~$X$ of variables is defined as
\begin{equation}
    \label{eq:power-sum}
    p_i(X) := \sum_{x \in X} x^i \ \text{.}
\end{equation}
(Recall that $0^0:=1$.)
For $1 \leq i \leq \card{X}$, the $i$th \emph{elementary symmetric polynomial} of~$X$ is defined as
\begin{equation}
    \label{eq:elementary-symm-poly}
    e_i(X) := \sum_{\substack{Z \subseteq X, \\ \card{Z}=i}} \; \prod_{z \in Z} z \ \text{.}
\end{equation}
That is, $e_i(X)$ is the multilinear polynomial whose monomials consist of all products of~$i$ distinct variables from~$X$.
We extend this definition to $i = 0$ by setting $e_0(X) := 1$ and to integers $i > \card{X}$ by setting $e_i(X) := 0$.

Power sums and elementary symmetric polynomials are related by \emph{Newton's identities} (see, e.g.,~\cite{Mead1992}), which assert that for $1 \leq i \leq \card{X}$,
\begin{equation}
    \label{eq:newtons-identities}
    i \cdot e_i(X) = \sum_{j = 1}^i (-1)^{j-1} \cdot e_{i-j}(X) \cdot p_j(X) \ \text{.}
\end{equation}

The following standard claim uses Newton's identities to show that if the first~$k$ power sums of two \emph{multisets} of field elements coincide, then so do the first~$k$ elementary symmetric polynomials of those multisets.

\begin{claim}
\label{clm:power-sums-elementary-symmetric}
Let $T,U$ be multisets over a prime field~$\F_q$, let $k \leq q$ be a positive integer, and suppose that $p_i(T) = p_i(U)$ for all $i \in [k]$.
Then $e_i(T) = e_i(U)$ for all $i \in [k]$.
\end{claim}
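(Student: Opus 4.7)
The plan is a straightforward induction on $i \in [k]$, using Newton's identities \eqref{eq:newtons-identities} to express each $i \cdot e_i$ as a polynomial in $e_0, \ldots, e_{i-1}$ and $p_1, \ldots, p_i$. The point of the induction is that the ``new'' power sum $p_i$ appearing on the right-hand side is exactly controlled by the hypothesis, while all the other $e_{i-j}$'s are already pinned down from previous steps.

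The base case $i = 0$ is immediate, since $e_0(T) = 1 = e_0(U)$ by the definition of $e_0$. For the inductive step, I would fix $1 \leq i \leq k-1$ and assume $e_j(T) = e_j(U)$ for all $j < i$. Writing out \eqref{eq:newtons-identities} for both $T$ and $U$ gives two expressions for $i \cdot e_i(T)$ and $i \cdot e_i(U)$, each a sum over $j = 1, \ldots, i$ of terms of the form $(-1)^{j-1} e_{i-j}(\cdot) \cdot p_j(\cdot)$. Each $p_j(T) = p_j(U)$ by hypothesis (since $1 \leq j \leq i \leq k-1$ places $j \in [k]$), and each $e_{i-j}(T) = e_{i-j}(U)$ by the inductive hypothesis (since $i - j < i$). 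Matching up the two sums term by term yields $i \cdot e_i(T) = i \cdot e_i(U)$ in $\F_q$.

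The one delicate point — and the reason the hypothesis $k \leq q$ is present — is cancelling the leading factor of $i$ to conclude $e_i(T) = e_i(U)$. This requires $i$ to be invertible in $\F_q$, which holds because $1 \leq i \leq k - 1 \leq q - 1$ forces $i \not\equiv 0 \pmod{q}$, so $i$ is a nonzero element of the prime field $\F_q$. I do not anticipate any real obstacle beyond flagging this use of the $k \leq q$ hypothesis; the rest is a textbook unwinding of Newton's identities.
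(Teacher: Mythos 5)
Your proof is correct and follows essentially the same route as the paper's: strong induction on $i$, Newton's identities to express $i \cdot e_i$ in terms of lower-index elementary symmetric polynomials and the power sums, and invertibility of $i$ in $\F_q$ (from $1 \leq i < k \leq q$) to cancel the leading factor. No gaps.
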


\begin{proof}
The proof is by (strong) induction.
For the base case where $i = 0$, we have by definition that $e_0(T) = e_0(U) = 1$.
For the inductive case where $1 \leq i < k$, because $i \neq 0 \in \F_q$ we have that
\begin{equation*}
e_i(T) = i^{-1} \cdot \sum_{j = 1}^i (-1)^{j-1} \cdot e_{i-j}(T) \cdot p_j(T) 
            = i^{-1} \cdot \sum_{j = 1}^i (-1)^{j-1} \cdot e_{i-j}(U) \cdot p_j(U) 
            = e_i(U) \ \text{,}
\end{equation*}
where the first and third equalities follow from Newton's identities (\cref{eq:newtons-identities}), and the second equality follows from the claim's hypothesis and the inductive hypothesis (note that the sums involve elementary symmetric polynomials~$e_{i-j}$ only for $i-j < i$).
\end{proof}

We define the \emph{root polynomial} $f_T(x) \in \F[x]$ of a multiset~$T$ over a field~$\F$ to be
\begin{equation} \label{eq:root-polynomial}
f_T(x) := \prod_{t \in T} (x - t) = \sum_{i = 0}^{\card{T}} (-1)^i \cdot e_i(T) \cdot x^{\card{T}-i} \ \text{.}
\end{equation}

We then get the following result, which uses \cref{clm:power-sums-elementary-symmetric} to show that if sufficiently many of the initial power sums of two multisets are equal, then the multisets themselves are equal.

\begin{proposition}
\label{prop:power-sums-root-polynomials}
Let~$q$ be a prime, let $k \leq q/2$ be a positive integer, let $T,U$ be multisets over~$\F=\F_q$ of total cardinality $\card{T}+\card{U} < 2k$, and suppose that $p_i(T) = p_i(U)$ for all $i \in [k]$. Then $T = U$.
\end{proposition}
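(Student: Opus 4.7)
The plan is to combine Claim~\ref{clm:power-sums-elementary-symmetric} with the cardinality bound in two stages: first use the $i = 0$ hypothesis to pin down the sizes of the multisets exactly, then use the remaining equalities of elementary symmetric polynomials to conclude that $f_T$ and $f_U$ coincide, whence $T = U$ by unique factorization in $\F_q[x]$.

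The first step is where the hypothesis $k \leq q/2$ is essential. Since $p_0(T) = \card{T}$ and $p_0(U) = \card{U}$ as elements of $\F_q$, the $i = 0$ equality yields $\card{T} \equiv \card{U} \pmod{q}$. Combined with $\card{T}, \card{U} \leq \card{T} + \card{U} < 2k \leq q$, this congruence forces equality on the nose; let $m$ denote the common value $\card{T} = \card{U}$. The size assumption then becomes $2m < 2k$, i.e., $m < k$.

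The second step is essentially immediate. Claim~\ref{clm:power-sums-elementary-symmetric} supplies $e_i(T) = e_i(U)$ for every $i \in [k]$, and in particular for every $i \in \set{0, 1, \ldots, m}$ since $m < k$. By the expansion in~\eqref{eq:root-polynomial}, both $f_T$ and $f_U$ are degree-$m$ polynomials whose coefficients $(-1)^i e_i(\cdot)$ agree term by term, so $f_T = f_U$ as elements of $\F_q[x]$. Since $f_T = \prod_{t \in T}(x - t)$ and $f_U = \prod_{u \in U}(x - u)$ already split into linear factors over $\F_q$ dictated precisely by the underlying multisets, unique factorization now forces $T = U$ as multisets.

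I do not expect any genuine obstacle here. The one point that is easy to miss is that $p_0$ encodes cardinality only \emph{modulo}~$q$, and the hypothesis $2k \leq q$ is exactly what is needed to lift residues back to integer cardinalities. Without this bound the argument collapses at the very first step, because $\card{T}$ and $\card{U}$ could differ by a nonzero multiple of $q$ while still satisfying all of the other hypotheses, in which case $f_T$ and $f_U$ would have different degrees and there would be no way to conclude $T = U$.
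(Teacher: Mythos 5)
Your proof is correct and follows essentially the same route as the paper's: use $p_0$ together with the bound $\card{T}+\card{U} < 2k \leq q$ to lift the mod-$q$ congruence to an actual equality of cardinalities, then invoke Claim~\ref{clm:power-sums-elementary-symmetric} and the expansion~\eqref{eq:root-polynomial} to equate the root polynomials, and finish with unique factorization in $\F_q[x]$. You are slightly more explicit than the paper in spelling out that the common cardinality $m$ satisfies $m < k$, which is what guarantees that the indices $0,\ldots,m$ of the needed elementary symmetric polynomials all lie in $[k]$; the paper leaves that inference implicit.
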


\begin{proof}
Because \[ \card{T} \equiv p_0(T) = p_0(U) \equiv \card{U} \pmod{q} \] and $0 \leq \card{T} + \card{U} < 2k \leq q$, it follows that $\card{T} = \card{U}$ and hence both $f_T(x), f_U(x)$ have degree $\card{T} < k$.

Next, by the hypotheses and \cref{clm:power-sums-elementary-symmetric}, we have that $e_i(T) = e_i(U)$ for all $i \leq \card{T}$.
Therefore, by the equality in \cref{eq:root-polynomial}, $f_T(x)$ and $f_U(x)$ are identical as formal polynomials in $\F[x]$.
Finally, because the polynomial ring $\F[x]$ is a unique factorization domain, and because $f_T(x)$ and $f_U(x)$ split over~$\F$ by construction, it follows that $T = U$.
\end{proof}

\subsection{Locally Dense Lattices}
\label{sec:locally-dense}

Roughly speaking, \emph{locally dense lattices} are lattices that have one or more cosets with many relatively short vectors.
Somewhat more precisely, a locally dense lattice consists of an integer lattice $\lat \subset \Z^{n}$ and a shift $\vec{x} \in \Z^{n}$ such that for some $\alpha \in (0, 1)$, the number of points in the coset $\vec{x}+\lat$ of norm at most $\alpha \cdot \lambda_1(\lat)$ is large (for our purposes, greater than $2^{n^{\eps}}$ for some constant $\eps > 0$).
Therefore, locally dense lattices are not efficiently list decodable, even combinatorially, to within distance $\alpha \cdot \lambda_1(\lat)$ in the worst case (in particular, around center~$-\vec{x}$).
For the purposes of proving hardness, we also require a linear map~$T$ that projects the short vectors in $\vec{x}+\lat$ onto a lower-dimensional hypercube $\bit^r$.

\begin{definition}
\label{def:locally-dense}
For $p \in [1, \infty)$, real $\alpha > 0$, and positive integers~$r$ and~$R$, a $(p, \alpha, r, R)$-\emph{locally dense lattice} consists of an
integer lattice of rank~$R$ (and some dimension~$n$) represented by a basis matrix $A \in \Z^{n \times R}$,
a positive integer~$\ell$,
a shift $\vec{x} \in \Z^{n}$,
and a matrix $T \in \Z^{r \times n}$, where
\begin{enumerate}[itemsep=0pt]
    \item \label{item:min-dist} $\lambda_1^{(p)}(\lat(A)) \geq \ell^{1/p}$ and
    \item \label{item:cover-hypercube} $\bit^r \subseteq T(V) := \set{T \vec{v} : \vec{v} \in V}$, where $V := (\vec{x} + \lat(A)) \cap \B_p^{n}(\alpha \cdot \ell^{1/p})$ is the set of all vectors of~$\ell_p$ norm at most~$\alpha \cdot \ell^{1/p}$ in the lattice coset $\vec{x}+\lat(A)$.
\end{enumerate}
\end{definition}

A useful tool for satisfying \cref{item:cover-hypercube} in the above definition is the following probabilistic version of Sauer's Lemma due to Micciancio~\cite{journals/siamcomp/Micciancio00}. It roughly says that for $n \gg r$, for any large enough collection of vectors~$W \subseteq B_{n,h}$ (the weight-$h$ slice of $\bit^{n}$), and for a random matrix $T \in \bit^{r \times n}$ whose coordinates are sampled independently with a suitable bias, $\bit^r \subseteq T(W)$ with good probability. We emphasize that all the arithmetic in this theorem is done over the integers (not over $\F_2$).

\begin{theorem}[{\cite[Theorem~4]{journals/siamcomp/Micciancio00}}]
\label{thm:prob-sauer}
Let~$r, n, h$ be positive integers, let $W \subseteq B_{n,h}$, and let $\eps > 0$. If $\card{W} \geq h! \cdot n^{24 r \sqrt{h}/\eps}$ and $T \in \bit^{r \times n}$ is sampled by setting each entry to~$1$ independently with probability $1/(4hr)$, then $\bit^r \subseteq T(W)$ with probability at least $1 - \eps$.
\end{theorem}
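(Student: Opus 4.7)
This statement is a probabilistic strengthening of the Sauer--Shelah lemma: we want to show that the biased random matrix $T$ surjectively maps the large set $W$ onto $\bit^r$.  My plan is to prove it via a union bound over the $2^r$ possible target vectors $\vec{t} \in \bit^r$, showing that each is hit by some $T\vec{v}$ with probability at least $1-\eps/2^r$.

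\emph{Single-vector probability.} Since the $r$ rows of $T$ are independent, for a fixed $\vec{v} \in W$ of Hamming weight $h$ and a fixed $\vec{t} \in \bit^r$ of weight $k$, I would compute row by row
\[
\Pr[T\vec{v} = \vec{t}] = (hp(1-p)^{h-1})^{k} \cdot ((1-p)^{h})^{r-k} = (hp)^{k} (1-p)^{rh - k},
\]
where $p = 1/(4hr)$. With $hp = 1/(4r)$ and $(1-p)^{rh} \geq 1 - rhp = 3/4$, this is at least $(3/4)(4r)^{-k}$. Letting $N_{\vec{t}} := \card{\set{\vec{v} \in W : T\vec{v} = \vec{t}}}$, linearity of expectation then gives $\E[N_{\vec{t}}] \geq (3/4) \card{W} (4r)^{-k}$, and the hypothesis $\card{W} \geq h! \cdot n^{24 r \sqrt{h}/\eps}$ makes this much larger than $2^r/\eps$ even in the worst case $k = r$.

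\emph{From expectation to existence.} The substantive step is converting the large expectation into the high-probability statement $\Pr[N_{\vec{t}} \geq 1] \geq 1 - \eps/2^r$. The obstacle is positive correlation: the events $\set{T\vec{v} = \vec{t}}$ and $\set{T\vec{v}' = \vec{t}}$ share a lot of randomness when $\mathrm{supp}(\vec{v})$ and $\mathrm{supp}(\vec{v}')$ overlap heavily. My natural attempt is the Paley--Zygmund inequality
\[
\Pr[N_{\vec{t}} \geq 1] \geq \frac{\E[N_{\vec{t}}]^{2}}{\E[N_{\vec{t}}^{2}]},
\]
together with a partition of pairs $(\vec{v},\vec{v}') \in W \times W$ by the overlap size $m := \card{\mathrm{supp}(\vec{v}) \cap \mathrm{supp}(\vec{v}')} \in \set{0,\ldots,h}$. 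For each $m$, the joint probability $\Pr[T\vec{v} = T\vec{v}' = \vec{t}]$ can be computed in closed form (again row by row, but now conditioning on how the $k$ ones of $\vec{t}$ in a given row are covered by the shared vs.\ exclusive parts of $\mathrm{supp}(\vec{v}), \mathrm{supp}(\vec{v}')$), and summed against the number of pairs with overlap $m$, which is at most $\card{W}\binom{h}{m} \binom{n-h}{h-m}$.

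\emph{Main obstacle.} The crux is to show that the second-moment sum is dominated by the near-disjoint pairs ($m$ small), so that $\E[N_{\vec{t}}^{2}]/\E[N_{\vec{t}}]^{2} \leq 1 + \eps/2^r$. This is where the specific quantitative hypothesis on $\card{W}$ must be used: the $h!$ factor essentially cancels the $\binom{n}{h} \sim n^{h}/h!$ denominator that arises when normalizing counts of weight-$h$ vectors, while the unusual $\sqrt{h}$ in the exponent $24 r\sqrt{h}/\eps$ plausibly comes from optimizing the overlap threshold at $m \approx \sqrt{h}$ (below which one gets near-independence, above which one uses the fact that heavy overlap is combinatorially rare). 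I expect the bookkeeping for these correlated pairs to be the most delicate part, and if the direct second-moment estimate turns out to be too loose, I would fall back on a two-stage decomposition of $T$: first sample, for each row, a random ``column support'' of size roughly $\sqrt{h} r$, handled by a deterministic Sauer--Shelah argument on $W$ restricted to that support; then layer independent Bernoulli$(p)$ bits on top, whose analysis then involves only well-separated supports where the correlations are mild.
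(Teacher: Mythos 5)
Your row-by-row computation of $\Pr[T\vec{v}=\vec{t}]$ (over the integers, treating $(T\vec{v})_i$ as a Binomial$(h,p)$ and asking for it to equal $0$ or $1$) is correct, and the union-bound framing over the $2^r$ targets is natural. Note first, though, that the paper does not prove \cref{thm:prob-sauer}; it cites it verbatim from~\cite{journals/siamcomp/Micciancio00}, whose argument is a combinatorial/iterative one rather than a second-moment estimate. So what you are proposing is a different route, and the real question is whether it closes.

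I do not think the Paley--Zygmund/Chebyshev step closes as sketched, and the gap is substantive rather than ``delicate bookkeeping.'' To push $\Pr[N_{\vec{t}}=0]\le\eps/2^r$ for every $\vec t$ via a second moment you need $\mathrm{Var}(N_{\vec{t}})/\E[N_{\vec t}]^2 \le \eps/2^r$, an exponentially small threshold in $r$. But the hypothesis $\card{W}\ge h!\cdot n^{24r\sqrt h/\eps}$ does not rule out highly correlated $W$: for instance, an \emph{intersecting family} (all $\vec v\in W$ containing a fixed coordinate) can still satisfy the size bound once $h \gtrsim (r/\eps)^2$ and $n$ is large, since $\binom{n-1}{h-1}$ then exceeds $h!\,n^{24r\sqrt h/\eps}$. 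For such a $W$, essentially every pair $(\vec v,\vec v')$ has overlap $m\ge 1$, and for a single row with $t_i=1$ the ratio $\Pr[(T\vec v)_i=(T\vec v')_i=1]/\Pr[(T\vec v)_i=1]^2$ is roughly $1+\tfrac{1}{h^2p}=1+\tfrac{4r}{h}$. Over the $k$ ones of $\vec t$ this contributes a per-pair covariance of order $q^2\cdot rk/h$, so $\mathrm{Var}/\E^2 = \Omega(rk/h)$. Getting this below $\eps/2^r$ would require $h \gtrsim rk\,2^r/\eps$, which the size hypothesis on $\card W$ does \emph{not} enforce. So the direct Paley--Zygmund bound can fail badly even when the hypothesis holds.

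Your fallback---condition on a column support or use a two-stage decomposition with a Sauer--Shelah argument---is much closer to what the cited proof actually does, but it is only gestured at. In particular, the role of the $h!$ factor (passing from unordered weight-$h$ supports to ordered $h$-tuples) and of the $\sqrt h$ in the exponent are precisely the things that a worked-out combinatorial argument has to account for; these do not emerge naturally from a variance estimate. As it stands, the core step is missing, and the primary approach you propose appears to be structurally insufficient rather than merely under-optimized.
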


\subsection{Hardness of \texorpdfstring{$\GapSVP$}{GapSVP} via Locally Dense Lattices}

We next recall a variant of (the decision version of) the Closest Vector Problem~(CVP), which will be the hard problem that we reduce to $\GapSVP_p$.
In this variant, called $\GapCVP'_p$, the target vector is either within a specified distance of a lattice vector given by a \emph{binary} combination of basis vectors, or \emph{all non-zero integer multiples} of the target vector are more than a~$\gamma$ multiple of this distance from the lattice (where distance is measured in the $\ell_p$ norm).

\begin{definition}
\label{def:gapcvp-prime}
For $p \in [1, \infty]$, an instance of the $\gamma$-$\GapCVP'_p$ problem consists of a rank-$r$ lattice basis $B \in \Z^{d \times r}$, a target vector~$\vec{t} \in \Z^d$, and a distance threshold $s > 0$.
The goal is to determine whether an input is a YES instance or a NO instance, where these are defined as follows:
\begin{itemize}
    \item YES instance: there exists a \emph{binary} $\vec{c} \in \bit^r$ such that $\norm{B\vec{c} - \vec{t}}_p \leq s$. 
    \item NO instance: $\dist_p(w \vec{t}, \lat(B)) > \gamma s$ for all $w \in \Z \setminus \set{0}$.
\end{itemize}
\end{definition}

\noindent The following hardness theorem follows via a reduction from Exact Set Cover to $\GapCVP'_p$.

\begin{theorem}[\cite{journals/jcss/AroraBSS97}]
\label{thm:cvp-hardness}
For every $p \in [1, \infty)$ and every constant $\gamma \geq 1$, $\gamma$-$\GapCVP'_p$ is $\NP$-hard.
\end{theorem}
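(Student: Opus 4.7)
The plan is to prove $\NP$-hardness of $\gamma$-$\GapCVP'_p$ by a gap-preserving reduction from a hard gap version of Exact Set Cover, adapting the classical technique of Arora, Babai, Stern, and Sweedyk and tightening it to secure the ``prime'' condition on integer multiples of the target.

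First, I would invoke the PCP theorem to obtain an $\NP$-hard ``gap'' version of Exact Set Cover with an arbitrary constant gap: given a universe $U=[d]$, sets $S_1,\ldots,S_r\subseteq U$, and a target cover size $h$, it is $\NP$-hard to distinguish the case where some $\vec{c}\in\bit^r$ of Hamming weight $h$ forms an exact partition of $U$ (i.e., $B^{\mathrm{inc}}\vec{c}=\vec{1}$, where $B^{\mathrm{inc}}\in\bit^{d\times r}$ is the incidence matrix) from the case where every $\vec{c}\in\Z^r$ either produces a vector whose $\ell_p$-distance from $\vec{1}$ exceeds $\gamma\cdot h^{1/p}$ or fails one of the structural constraints. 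This gap can be amplified to any desired constant $\gamma\geq 1$ by iterating the underlying PCP, so we may assume the gap is as large as needed for our fixed~$\gamma$.

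Next, I would build the $\GapCVP'_p$ instance by stacking blocks: the basis
\[
B = \begin{pmatrix} B^{\mathrm{inc}} \\ N\cdot\vec{1}^T \\ N\cdot I_r \end{pmatrix}\in\Z^{(d+1+r)\times r}, \qquad \vec{t}=\begin{pmatrix}\vec{1}\\ Nh\\ \vec{0}_r\end{pmatrix}, \qquad s=h^{1/p},
\]
where $N$ is a sufficiently large integer constant (independent of the instance size) to be chosen. In the YES case, the $\vec{c}\in\bit^r$ certifying an exact cover satisfies $B\vec{c}-\vec{t}=\vec{0}$, so the distance is $0\leq s$. In the NO case with multiplier $w=1$, the first $d$ coordinates of $B\vec{c}-\vec{t}$ already have $\ell_p$-norm exceeding $\gamma s$ by the Set Cover gap (or one of the auxiliary rows is violated by a cost of at least $N$, which we make exceed $\gamma s$). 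The role of the middle row $N\vec{1}^T$ and the bottom block $N I_r$ is to enforce the ``prime'' condition: for $w\neq 1$ the middle coordinate of $w\vec{t}-B\vec{c}$ equals $N(wh-\sum_j c_j)$, while the bottom block equals $-N\vec{c}$. A simple case analysis shows that either $\sum_j c_j\neq wh$, contributing a cost of at least $N$ in the middle coordinate, or else $\vec{c}$ itself has a coordinate of absolute value $\geq w\geq 2$ in the bottom block (since an average of $wh/r$ of magnitude $\geq 1$ over $\bit$ is impossible for $w\geq 2$ once $h\leq r$); choosing $N$ large enough relative to $\gamma s$ ensures that all such configurations are at distance greater than $\gamma s$ from $w\vec{t}$, as required by \cref{def:gapcvp-prime}.

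The main obstacle is the verification of the NO case for all nonzero integer multipliers $w$ simultaneously and uniformly across all $p\in[1,\infty)$; this is where the extra block structure and the large constant $N$ do the work. One must choose $N$ large enough to dominate any contribution from the first $d$ coordinates (which scale with $h^{1/p}$) but still polynomial in the input size, so that the reduction runs in polynomial time. Once $N$ is chosen appropriately, the standard gap-Set-Cover argument handles $w=1$, while the bottom two blocks rule out every $w\in\Z\setminus\{0,1\}$ (and $w=0$ is excluded by the definition of the promise). Amplification of the gap from the constant obtained by PCP to arbitrary $\gamma$ can alternatively be folded into the reduction by tensoring the Set Cover instance before invoking the construction, which preserves the prime property under tensor product. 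The combination of these ingredients yields $\NP$-hardness of $\gamma$-$\GapCVP'_p$ for every constant $\gamma\geq 1$ and every $p\in[1,\infty)$.
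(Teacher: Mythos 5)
The paper does not actually prove this theorem; it cites it from Arora, Babai, Stern, and Sweedyk and simply notes that the reduction is from Exact Set Cover. Your high-level plan (start from a gap version of Exact Set Cover obtained via PCP, embed the incidence matrix and a size-counting row, and scale auxiliary coordinates by a large constant $N$) is in the spirit of the ABSS construction. However, your specific block construction has a concrete error.

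In your construction, you set
\[
B = \begin{pmatrix} B^{\mathrm{inc}} \\ N\cdot\vec{1}^T \\ N\cdot I_r \end{pmatrix}, \qquad \vec{t}=\begin{pmatrix}\vec{1}\\ Nh\\ \vec{0}_r\end{pmatrix}, \qquad s=h^{1/p},
\]
and claim that in the YES case a binary $\vec{c}$ certifying an exact cover gives $B\vec{c}-\vec{t}=\vec{0}$. This is false: the bottom block yields $(B\vec{c}-\vec{t})$ with last $r$ coordinates equal to $N\vec{c}$, which has $\ell_p$-norm $N h^{1/p}$ whenever $\vec{c}$ has Hamming weight $h\geq 1$. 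Since you must take $N$ large ($N>\gamma h^{1/p}$) to make the middle row effective, the YES distance is $N h^{1/p} \gg s$, so YES instances are not mapped to YES instances. The $N\cdot I_r$ block cannot simultaneously penalize nonzero $\vec{c}$ in the NO case and spare the (also nonzero) $\vec{c}$ in the YES case.

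If you delete that block, the YES case is fine, but the NO case for $w\geq 2$ becomes the crux of the matter and is not handled by anything you wrote. Your remaining argument for $w\geq 2$ tries to use $\sum_j c_j \neq wh$, but $\vec{c}$ ranges over all of $\Z^r$, so the adversary can simply pick $\vec{c}$ with $\sum_j c_j = wh$, killing the middle-row contribution, and then the question becomes whether an integer combination of columns of $B^{\mathrm{inc}}$ can $\ell_p$-approximate $w\vec{1}$ well. Ruling this out requires a stronger inapproximability property of the starting instance than ``no small exact cover exists'' (it must be hard even against integer, possibly negative, combinations targeting $w\vec{1}$ for all $w$). This is precisely the nontrivial content of the ABSS reduction, and your proposal does not supply it. As written, the proof has a genuine gap in both directions: the YES case fails because of the spurious block, and the $w\geq 2$ NO case is not actually established once that block is removed.
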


The following theorem gives a polynomial-time reduction from $\gamma$-$\GapCVP'_p$ to $\gamma'$-$\GapSVP_p$ for some approximation factors $\gamma > \gamma' \geq 1$, which uses a locally dense lattice as advice.
In general, this advice makes the reduction non-uniform, but when the advice is efficiently computable by a (randomized) algorithm, as it is in this and prior works, the procedure is an efficient (randomized) reduction.
The reduction below is very similar to the one in~\cite[Theorem~5.1]{journals/toc/Micciancio12}, but written so as to allow for using an arbitrary locally dense lattice as advice.
Due to this similarity, and for concision, we defer its proof to \cref{sec:cvp-svp-reduction-proof}.

\begin{restatable}{theorem}{cvpsvp}
\label{thm:cvp-to-svp}
Let $p \geq 1$, $r$ and~$n$ be positive integers, $\alpha > 0$ be a constant, and $\gamma, \gamma'$ be constants satisfying
\[
1/\alpha > \gamma' \geq 1 \text{ and } \gamma \geq \gamma' \cdot \Big(\frac{1}{1 - (\alpha \gamma')^p}\Big)^{1/p} \ \text{.}
\]
There is a deterministic polynomial-time algorithm that, given a $\gamma$-$\GapCVP'_p$ instance $(B, \vec{t}, s)$ of rank~$r$ and a $(p, \alpha, r, R)$-locally dense lattice $(A, \ell, \vec{x}, T)$ as input, outputs a $\gamma'$-$\GapSVP_p$ instance $(B', s')$ of rank $R+1$ which is a YES (respectively, NO) instance if $(B, \vec{t}, s)$ is a YES (resp., NO) instance.
\end{restatable}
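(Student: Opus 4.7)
The plan is to construct, from the $\gamma$-$\GapCVP'_p$ instance and the locally dense lattice, a single rank-$(R{+}1)$ lattice whose minimum distance distinguishes the YES and NO cases by a factor of at least $\gamma'$. The top rows of the new basis will embed the locally dense lattice $\lat(A)$ together with an extra column equal to the coset shift $\vec{x}$, while the bottom rows, scaled by a positive integer $\beta$, will encode the CVP target $\vec{t}$ via the transform $T \in \Z^{r\times n}$ that projects short vectors of $\vec{x}+\lat(A)$ onto $\bit^r$. Concretely, I will take
\[
B' := \begin{pmatrix} A & \vec{x} \\ \beta \cdot BTA & \beta \cdot (BT\vec{x} - \vec{t}) \end{pmatrix} \in \Z^{(n+d)\times(R+1)}
\]
and set the distance threshold $s' := (\alpha^p\ell + (\beta s)^p)^{1/p}$. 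A general lattice vector is $B'\begin{pmatrix}\vec{z}\\ w\end{pmatrix} = \begin{pmatrix} A\vec{z}+w\vec{x}\\ \beta(BT(A\vec{z}+w\vec{x}) - w\vec{t}) \end{pmatrix}$ for $\vec{z}\in\Z^R$, $w\in\Z$.

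For the YES case, the locally dense property supplies $\vec{v}\in(\vec{x}+\lat(A))\cap \B_p^n(\alpha\ell^{1/p})$ with $T\vec{v}=\vec{c}$, where $\vec{c}\in \bit^r$ is the binary witness satisfying $\norm{B\vec{c}-\vec{t}}_p\leq s$. Writing $\vec{v} = A\vec{z}+\vec{x}$ for the (unique) integer $\vec{z}$ and taking coefficients $(\vec{z},1)$ yields a lattice vector with top block $\vec{v}$ of $\ell_p$ norm at most $\alpha\ell^{1/p}$ and bottom block $\beta(B\vec{c}-\vec{t})$ of $\ell_p$ norm at most $\beta s$, hence total norm at most $s'$. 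For the NO case, every non-zero lattice vector is long: if $w=0$ and $\vec{z}\neq \vec{0}$, the top block already has norm at least $\lambda_1^{(p)}(\lat(A))\geq \ell^{1/p}$ by property~\ref{item:min-dist} of the locally dense lattice; and if $w\neq 0$, then $BT(A\vec{z}+w\vec{x})$ lies in $\lat(B)$ since $T(A\vec{z}+w\vec{x})\in\Z^r$, so the bottom block has $\ell_p$ norm at least $\beta\cdot \dist_p(w\vec{t},\lat(B)) > \beta\gamma s$ by the NO promise on the CVP instance.

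Choosing $\beta$ so that $\beta\gamma s \approx \ell^{1/p}$ balances the two NO lower bounds. The separation requirements $\gamma' s' < \ell^{1/p}$ (for the $w=0$ subcase) and $\gamma' s' \leq \beta\gamma s$ (the $w\neq 0$ subcase already yields strict inequality) then reduce, after substituting $\beta s = \ell^{1/p}/\gamma$, to the single condition $(\alpha\gamma')^p + (\gamma'/\gamma)^p \leq 1$, which is equivalent to the hypothesis $\gamma \geq \gamma' (1-(\alpha\gamma')^p)^{-1/p}$. The main obstacle I anticipate is integrality: the ideal choice $\beta = \ell^{1/p}/(\gamma s)$ need not be rational. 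I would handle this by first rescaling the input CVP instance $(B,\vec{t},s)$ through by a sufficiently large positive integer (which preserves both the promise and the $\ell_p$ geometry), so that an integer $\beta$ can be chosen for which $\beta\gamma s$ lies close enough to $\ell^{1/p}$; the strictness slack provided by the hypothesis comfortably absorbs the resulting rounding error and gives an instance of $\gamma'$-$\GapSVP_p$ with the desired YES/NO preservation.
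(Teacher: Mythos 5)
Your construction is essentially the paper's, just rescaled: the paper takes
\[
B' := \begin{pmatrix} BTA & BT\vec{x} - \vec{t} \\ \beta A & \beta \vec{x} \end{pmatrix}, \qquad
\beta := \frac{\gamma' s}{\ell^{1/p}(1-(\alpha\gamma')^p)^{1/p}}, \qquad
s' := (s^p + \alpha^p\beta^p\ell)^{1/p},
\]
which is your matrix with the scaling factor moved from the bottom block to the top block and an overall multiplication by $1/\beta$; the YES argument via the coefficient vector $(\vec{z},1)$ and the NO argument by case split on $w=0$ versus $w\neq 0$ are identical in substance, and your algebraic reduction to $(\alpha\gamma')^p + (\gamma'/\gamma)^p \leq 1$ is exactly how the paper's parameter condition enters.

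One remark on your integrality discussion: you are right that $\beta$ as defined is generically irrational (since $\ell^{1/p}$ need not be rational), so that without rounding $B'$ is not literally an integer matrix as \cref{def:gapsvp} requires. The paper does not address this explicitly either. Your proposed remedy (rescale the input instance and round $\beta$, absorbing the error) is the right kind of fix, but note it requires genuine slack, and the stated hypothesis $\gamma \geq \gamma'(1-(\alpha\gamma')^p)^{-1/p}$ is not strict; moreover, the NO case with $w=0$ needs $\norm{B'(\vec{z},0)}_p$ \emph{strictly} greater than $\gamma' s'$, whereas \cref{item:min-dist} of \cref{def:locally-dense} guarantees only $\lambda_1^{(p)}(\lat(A)) \geq \ell^{1/p}$, not $> \ell^{1/p}$. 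So at the exactly-tight parameter setting there is in fact no slack to exploit. This is a shared wrinkle — the paper's own chain of inequalities in the $w=0$ case also asserts a strict $>$ that the definition does not literally provide — and it is harmless in the applications (where the hypothesis is invoked with strict inequality), but your proof should either assume the hypothesis holds strictly or note that the locally dense lattices used in \cref{thm:ldl-from-rs} can be arranged to have $\lambda_1^{(p)}$ strictly above $\ell^{1/p}$.
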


\noindent From these two theorems we get the following hardness results for $\GapSVP$.

\begin{corollary}
\label{cor:svp-hardness}
Let $p \geq 1$, let $r$ be a positive integer, let $\alpha > 0$ be a constant, and suppose that there is an algorithm $A$ that computes a $(p, \alpha, r, \poly(r))$-locally dense lattice in $\poly(r)$ time.
Let $\gamma$ be a constant satisfying $1 \leq \gamma < 1/\alpha$. Then:
\begin{enumerate}
    \item \label{item:det-svp-hardness} If~$A$ is deterministic, then $\gamma$-$\GapSVP_p$ is $\NP$-hard (and exact $\GapSVP_p$ is $\NP$-complete).
    \item \label{item:one-sided-error-svp-hardness} If $A$ is randomized and its output satisfies \cref{item:min-dist} of \cref{def:locally-dense} with probability~$1$ and \cref{item:cover-hypercube} of \cref{def:locally-dense} with probability at least $2/3$, then $\gamma$-$\GapSVP_p$ is not in $\RP$ unless $\NP \subseteq \RP$.\footnote{The condition ``$\gamma$-$\GapSVP_p$ is not in $\RP$'' is a slight abuse of notation, since $\gamma$-$\GapSVP_p$ for $\gamma > 1$ is a promise problem rather than a language.
    However, the definition of $\RP$ can naturally be extended to encompass promise problems, which is the intended meaning here.}
    \item \label{item:two-sided-error-svp-hardness} If~$A$ is randomized, and its output satisfies \cref{item:min-dist,item:cover-hypercube} of \cref{def:locally-dense} with probability at least $2/3$, then there is no \emph{randomized} polynomial-time algorithm for $\gamma$-$\GapSVP_p$ unless $\NP \subseteq \BPP$.
\end{enumerate}
\end{corollary}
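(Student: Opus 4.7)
\emph{Proof plan.} The plan is to chain \cref{thm:cvp-hardness} with the reduction of \cref{thm:cvp-to-svp}, using the hypothesized algorithm $A$ as the source of the locally dense lattice advice. Given $\alpha > 0$ and a constant $\gamma \in [1, 1/\alpha)$, the strict inequality $\alpha \gamma < 1$ makes
\begin{equation*}
\gamma_C := \gamma \cdot \bigl(1 - (\alpha \gamma)^p\bigr)^{-1/p}
\end{equation*}
a finite constant satisfying the hypothesis of \cref{thm:cvp-to-svp} with $\gamma' = \gamma$. By \cref{thm:cvp-hardness}, $\gamma_C$-$\GapCVP'_p$ is \NP-hard. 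The composite procedure (i) invokes $A$ to produce a $(p,\alpha,r,\poly(r))$-locally dense lattice for a suitable polynomial $r$ in the input size, and then (ii) applies the deterministic polynomial-time reduction of \cref{thm:cvp-to-svp} to this lattice and the $\gamma_C$-$\GapCVP'_p$ input, yielding a polynomial-size $\gamma$-$\GapSVP_p$ instance.

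For part~(\ref{item:det-svp-hardness}), $A$ is deterministic, so the composition is a deterministic Karp reduction from an \NP-hard problem to $\gamma$-$\GapSVP_p$, giving \NP-hardness. For $\gamma = 1$ (permitted since $\alpha < 1$), a shortest nonzero lattice vector serves as an \NP-witness, yielding \NP-completeness of exact $\GapSVP_p$.

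For parts~(\ref{item:one-sided-error-svp-hardness}) and~(\ref{item:two-sided-error-svp-hardness}), I expect that examining the construction in \cref{thm:cvp-to-svp} will show that \cref{item:min-dist} of \cref{def:locally-dense} (the lower bound $\lambda_1^{(p)}(\lat(A)) \geq \ell^{1/p}$) is exactly what ensures soundness (NO $\GapCVP'_p \Rightarrow$ NO $\GapSVP_p$), while \cref{item:cover-hypercube} (the covering condition $\bit^r \subseteq T(V)$) is used only for completeness (YES $\Rightarrow$ YES). Under this asymmetry, part~(\ref{item:one-sided-error-svp-hardness})'s hypothesis---\cref{item:min-dist} always, \cref{item:cover-hypercube} with probability $\geq 2/3$---makes the composed reduction one-sided: NO instances map to NO instances with probability $1$, and YES map to YES with probability $\geq 2/3$. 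Composing with a hypothetical $\RP$ algorithm for $\gamma$-$\GapSVP_p$ (which errs only in the YES direction) and applying constant amplification yields an $\RP$ algorithm for the \NP-hard $\gamma_C$-$\GapCVP'_p$, so $\NP \subseteq \RP$. Part~(\ref{item:two-sided-error-svp-hardness})'s hypothesis allows both items to fail, each with probability $\leq 1/3$, so the reduction becomes two-sided with bounded error; composing with a randomized polynomial-time algorithm for $\gamma$-$\GapSVP_p$ analogously gives $\NP \subseteq \BPP$.

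The main obstacle is verifying the asymmetry needed for part~(\ref{item:one-sided-error-svp-hardness}): if a failure of \cref{item:cover-hypercube} could also create a spurious short vector in the output $\gamma$-$\GapSVP_p$ instance, then the one-sided error structure would collapse, and only the weaker conclusion of part~(\ref{item:two-sided-error-svp-hardness}) would follow under the one-sided hypothesis. Confirming this requires looking inside the proof of \cref{thm:cvp-to-svp}, but matches the standard intuition for this kind of gadget reduction: \cref{item:cover-hypercube} is used only to certify the existence of enough short vectors in a particular coset (witnessing a YES output), while \cref{item:min-dist}, being a uniform lower bound on $\lambda_1^{(p)}(\lat(A))$, is what uniformly rules out short vectors that would spuriously make the SVP output YES.
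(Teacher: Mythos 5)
Your proposal is correct and follows the same route as the paper: combine \cref{thm:cvp-hardness} with \cref{thm:cvp-to-svp}, distinguishing the three cases by whether~$A$ is deterministic, one-sided random, or two-sided random. The asymmetry you flagged as needing verification---that \cref{item:min-dist} of \cref{def:locally-dense} governs the NO-to-NO direction while \cref{item:cover-hypercube} is used only for YES-to-YES---is exactly what the paper's proof observes by inspecting \cref{thm:cvp-to-svp}, so your expectation is confirmed.
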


\begin{proof}
\cref{item:det-svp-hardness,item:two-sided-error-svp-hardness} follow immediately by combining \cref{thm:cvp-hardness,thm:cvp-to-svp}.
Inspection of the proof of \cref{thm:cvp-to-svp} shows that for NO instances to be mapped to NO instances, only \cref{item:min-dist} of \cref{def:locally-dense} is needed, from which \cref{item:one-sided-error-svp-hardness} of the claim follows.
\end{proof}

\section{Local Density from Reed-Solomon Codes}
\label{sec:ldl-reed-solomon}

In this section we show how to obtain locally dense lattices from Reed-Solomon codes with appropriate parameters.
More specifically, we show to satisfy \cref{def:locally-dense} using a lattice $\lat := \latpar(H)$ corresponding to a parity-check matrix $H = H_q(k, S)$ from \cref{def:parity-check-matrix}. (Recall that $H$ is the parity-check matrix of a Reed-Solomon code when $S = \F_q$, and of a generalized Reed-Solomon code for any $S \subseteq \F_q$.)

The overall structure of the argument is as follows.
First, in \cref{sec:rs-min-dist} we give a lower bound of $\lambda_1^{(p)}(\lat) \geq (2k)^{1/p}$, which corresponds to \cref{item:min-dist} of \cref{def:locally-dense}, by using the connection between power sums and symmetric polynomials (see \cref{sec:symmetric-polynomials}).
Then, in \cref{sec:rs-dense-cosets} we use the upper bound $\det(\lat) \leq q^k$ from \cref{lem:parity-lat-properties} and the pigeonhole principle to show that there exists a lattice coset with many short (binary) vectors, and in fact a suitably sampled random coset has this property with good probability.
Finally, in \cref{sec:rs-main-argument} we set parameters and use \cref{thm:prob-sauer} to satisfy \cref{item:cover-hypercube} of \cref{def:locally-dense} with good probability.

\subsection{Minimum Distance}
\label{sec:rs-min-dist}

The following theorem says that for any $k \leq \card{S}/2$, the~$\ell_1$ minimum distance of $\lat=\latpar(H)$ for $H=H_q(k,S)$ is at least~$2k$.
The theorem and proof are very similar to one that lower bounds the minimum distance of Craig lattices, as given in~\cite{Craig} and~\cite[Chapter 8, Theorem 7]{conway1999sphere}.

Note that a weaker bound of $\lambda_1^{(1)}(\lat) \geq k+1$ (for any $k < q$) follows trivially from the minimum Hamming distance $k+1$ of the (generalized Reed-Solomon) code having parity-check matrix~$H$.
However, this bound is not strong enough for the rest of the local-density argument below, which requires $\lambda_1^{(1)}(\lat) \geq (1+\Omega(1)) k$.

\begin{theorem}
\label{thm:rs-min-dist}
Let~$q$ be a prime, let $S \subseteq \F_q$, let $k \leq \card{S}/2$ be a positive integer, and let $H := H_{q}(k,S) \in \F_q^{k \times S}$ be the matrix from \cref{def:parity-check-matrix}.
Then $\lat = \latpar(H)$ has~$\ell_1$ minimum distance $\lambda^{(1)}_1(\lat) \geq 2k$.

\noindent As a consequence, for any $p \in [1,\infty)$ the~$\ell_p$ minimum distance satisfies $\lop(\lat) \geq (2k)^{1/p}$.
\end{theorem}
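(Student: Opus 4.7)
The plan is to translate the parity-check condition $H\vec{z} = \vec{0}$ into an equality of power sums of two multisets, and then apply \cref{prop:power-sums-root-polynomials} to conclude the bound.

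Concretely, fix a nonzero $\vec{z} \in \lat = \latpar(H)$, viewed as a vector $(z_s)_{s \in S} \in \Z^S$. Split each coordinate into its positive and negative parts, and form two multisets $T, U$ over $\F_q$: let $T$ contain each $s \in S$ with multiplicity $\max(z_s, 0)$, and let $U$ contain each $s \in S$ with multiplicity $\max(-z_s, 0)$. Then $\card{T} + \card{U} = \sum_{s \in S} \abs{z_s} = \norm{\vec{z}}_1$. The $i$-th row of the equation $H\vec{z} = \vec{0} \in \F_q^k$ reads $\sum_{s \in S} z_s \cdot s^i \equiv 0 \pmod q$, which, after moving negative terms to the right-hand side and recalling the definition \eqref{eq:power-sum} of power sums, is exactly the statement $p_i(T) = p_i(U)$ in~$\F_q$. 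This holds for every $i \in [k]$.

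Now I would argue by contradiction: suppose $\norm{\vec{z}}_1 = \card{T} + \card{U} < 2k$. Since $k \leq \card{S}/2 \leq q/2$ by hypothesis, \cref{prop:power-sums-root-polynomials} applies and yields $T = U$ as multisets. But then the multiplicities of every $s \in S$ in $T$ and $U$ agree, which forces $\max(z_s, 0) = \max(-z_s, 0)$ and hence $z_s = 0$ for all $s$, contradicting $\vec{z} \neq \vec{0}$. Therefore $\norm{\vec{z}}_1 \geq 2k$, proving the first claim.

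For the $\ell_p$ consequence, I would simply use integrality of $\lat \subseteq \Z^S$: each $\abs{z_s}$ is a non-negative integer, so $\abs{z_s}^p \geq \abs{z_s}$ for any $p \geq 1$. Summing gives $\norm{\vec{z}}_p^p \geq \norm{\vec{z}}_1 \geq 2k$, so $\lop(\lat) \geq (2k)^{1/p}$.

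The only step that required any real ingenuity is the translation in the first paragraph: recognizing that $H = H_q(k,S)$ is a transposed Vandermonde matrix makes the parity-check condition $H\vec{z} = \vec{0}$ exactly an equality of the first $k$ power sums of the multisets encoding the positive and negative parts of $\vec{z}$. Everything else is a direct appeal to \cref{prop:power-sums-root-polynomials} (whose proof, via Newton's identities, is where the real algebraic work happens) plus the elementary integrality argument. So the proof itself is short, and I anticipate no real obstacle beyond setting up the multiset correspondence cleanly.
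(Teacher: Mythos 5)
Your proof is correct and follows the same approach as the paper's: decompose $\vec{z}$ into positive and negative parts, encode them as multisets, translate $H\vec{z}=\vec{0}$ into equality of the first $k$ power sums, and invoke \cref{prop:power-sums-root-polynomials}; the $\ell_p$ consequence is the same integrality argument. The only cosmetic difference is that you phrase the main step as a contradiction while the paper argues directly that $\norm{\vec{x}}_1 < 2k$ forces $\vec{x}=\vec{0}$.
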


We point out that the $2^{1/p}$ factor in \cref{thm:rs-min-dist} propagates to the relative-distance bound for local density in \cref{thm:ldl-from-rs} below, and then to the $\GapSVP$ approximation factor in our main hardness theorem, \cref{thm:svp-hardness}.

\begin{proof}
The consequence follows immediately from the fact that $\lat \subseteq \Z^S$ and $\norm{\vec{v}}_p \geq \norm{\vec{v}}_1^{1/p}$ for all $\vec{v} \in \Z^S$.

Now consider some arbitrary $\vec{x} \in \lat \subseteq \Z^{S}$ for which $\norm{\vec{x}}_1 < 2k$; we will show that $\vec{x}=\vec{0}$.
Let $\vec{x}^+, \vec{x}^- \in \Z^{S}$ be the unique non-negative integer vectors satisfying $\vec{x} = \vec{x}^+ - \vec{x}^-$.
Define multisets $T^+$ and $T^-$ over~$S$ that respectively depend on~$\vec{x}^+$ and $\vec{x}^-$ as follows.
For each $s \in S$ with $x_s^+ > 0$ (respectively, $x_s^- > 0$), let~$T^+$ (respectively, $T^-$) contain~$s$ with multiplicity~$x_s^+$ (respectively, $x_s^-$).\footnote{For example, if $S = \set{0,1,2,3,4} = \F_q$ and $\vec{x} = (x_0, x_1, x_2, x_3, x_4)^t = (1, -2, 0, 1, 0)^t$, then $x^+ = (1, 0, 0, 1, 0)$, $x^- = (0, 2, 0, 0, 0)$, and accordingly $T^+ = \set{0, 3}$, $T^- = \set{1, 1}$.}

Note that $\card{T^+} + \card{T^-} = \norm{\vec{x}}_1 < 2k$.
Because $H\vec{x} = H(\vec{x}^+ - \vec{x}^-) = \vec{0} \in \F_q^{k}$, by definition of~$H$ we have that $p_i(T^+) = p_i(T^-)$ for all $i \in [k]$ (where recall that~$p_i$ denotes the $i$th power sum).
Because $k \leq \card{S}/2 \leq q/2$, by \cref{prop:power-sums-root-polynomials} it follows that $T^+ = T^-$. Since $T^+ \cap T^- = \emptyset$ by construction, we must have $T^+ = T^- = \emptyset$, and hence $\vec{x}=\vec{0}$, as desired.
\end{proof}

The following lemma (which is well known in other forms) shows that the lower bound $\lambda_1^{(p)}(\latpar(H)) \geq (2k)^{1/p}$ from \cref{thm:rs-min-dist} is in fact an equality under mild conditions on the parameters, by giving an explicit lattice coset that has multiple short vectors.
However, because it proves only that the number of such vectors is polynomial in the dimension, it is insufficient to establish local density.

\begin{lemma}
\label{lem:rs-min-dist-upper}
Let~$q$ be a prime, let $k$ be a positive integer that divides~$q-1$, and let $H := H_q(k,S) \in \F_q^{k \times S}$ where $\F_q^* \subseteq S \subseteq \F_q$.
Then for $\vec{u} := (k, 0, \ldots, 0) \in \F_q^{k}$, the lattice coset $\latpar_{\vec{u}}(H) = \set{\vec{x} \in \Z^S : H \vec{x} = \vec{u}}$ contains $(q-1)/k$ binary vectors of Hamming weight~$k$ and pairwise disjoint support.
As a consequence, when~$k < q-1$, we have $\lambda_1^{(p)}(\latpar(H)) = (2k)^{1/p}$ for any $p \in [1,\infty)$.
\end{lemma}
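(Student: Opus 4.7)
The plan is to construct the claimed $(q-1)/k$ binary vectors explicitly from the structure of the multiplicative group $\F_q^*$, and to verify the parity-check condition via the standard power-sum identity for roots of unity.

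First, because $k$ divides $q-1$ and $\F_q^*$ is cyclic of order $q-1$, there is a unique subgroup $G \leq \F_q^*$ of order~$k$, namely the group of $k$th roots of unity in~$\F_q$. The cosets of~$G$ in~$\F_q^*$ partition $\F_q^*$ into $(q-1)/k$ pairwise disjoint sets $g_1 G, \ldots, g_{(q-1)/k} G$, each of size~$k$. Since $\F_q^* \subseteq S$, for each coset $gG$ I would define the indicator vector $\vec{v}_{gG} \in \bit^S$ by $(\vec{v}_{gG})_s = 1$ iff $s \in gG$. By construction, each $\vec{v}_{gG}$ has Hamming weight exactly~$k$, and the supports are pairwise disjoint.

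Next, I would verify that $H \vec{v}_{gG} = \vec{u} = (k, 0, \ldots, 0)^T$ for every such coset. The $i$th row computes
\[
(H \vec{v}_{gG})_i = \sum_{s \in gG} s^i = g^i \sum_{h \in G} h^i \ \text{.}
\]
For $i = 0$, this gives $k$ by inspection. For $1 \leq i \leq k-1$, fix a generator $\zeta$ of~$G$; then $\sum_{h \in G} h^i = \sum_{j=0}^{k-1} \zeta^{ij}$, which equals $0$ in~$\F_q$ because $\zeta^i \neq 1$ (as $k \nmid i$) and the geometric sum telescopes via $\zeta^{ik} = 1$. So each row $i \geq 1$ contributes~$0$, confirming $H \vec{v}_{gG} = \vec{u}$. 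This establishes the first part of the lemma.

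For the consequence, I would observe that when $k < q-1$ there are at least two distinct cosets, say $gG$ and $g'G$; then $\vec{v}_{gG} - \vec{v}_{g'G} \in \latpar(H)$ is nonzero, and because the supports are disjoint its~$\ell_p$ norm equals $(k + k)^{1/p} = (2k)^{1/p}$. This yields $\lambda_1^{(p)}(\latpar(H)) \leq (2k)^{1/p}$. The matching lower bound comes from \cref{thm:rs-min-dist}, which applies because $k \mid q-1$ and $k < q-1$ force $k \leq (q-1)/2 \leq \card{S}/2$. I do not anticipate a serious obstacle here: the only subtle point is the applicability of \cref{thm:rs-min-dist}, which is handled by this elementary divisor bound.
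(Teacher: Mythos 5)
Your proof is correct and uses the same core construction as the paper: the indicator vectors of the cosets of the order-$k$ subgroup $G \leq \F_q^*$. The only difference is in verifying that the nonzero power sums $\sum_{h \in G} h^i$ vanish for $1 \le i < k$: you compute the geometric sum directly (exploiting $\zeta^i \neq 1$ and $\zeta^{ik} = 1$), whereas the paper observes that the root polynomial of a coset $cG$ is $x^k - c^k$ and then invokes Newton's identities; your route is a bit more elementary, while the paper's keeps the argument stylistically aligned with the Newton's-identities machinery it already develops for \cref{thm:rs-min-dist}. You also make explicit the check (left implicit in the paper) that $k \mid q-1$ and $k < q-1$ force $k \le (q-1)/2 \le \card{S}/2$, so the lower bound from \cref{thm:rs-min-dist} indeed applies — a worthwhile addition.
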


\begin{proof}
Let~$G$ be the order-$k$ subgroup of the (cyclic, multiplicative) group $\F_q^*$, i.e., the subgroup of the $k$th roots of unity.
Then the binary indicator vectors $\vec{x}_{C} \in \bit^S$ of each of the $(q-1)/k$ pairwise disjoint cosets $C = cG$ of~$G$ all belong to the coset $\latpar_{\vec{u}}(H)$.
This is simply because for any such coset, the $0$th power sum of its elements is~$k$, and the $i$th power sum for $0 < i < k$ is zero;
this can be seen by Newton's identities and the fact that the root polynomial of~$C$ is $f_C(x) = \prod_{c \in C} (x-c) = x^k - r_C$, where $r_C = c^k$ for every $c \in C$.
Finally, when~$k < q-1$, there is more than one such vector~$\vec{x}_C$, and the differences between distinct pairs of them are lattice vectors in $\set{0, \pm 1}^S$ of Hamming weight~$2k$, and hence~$\ell_p$ norm~$(2k)^{1/p}$.
\end{proof}

\subsection{Dense Cosets}
\label{sec:rs-dense-cosets}

Following an approach previously used in~\cite{journals/siamcomp/Micciancio00,journals/jacm/Khot05,journals/toc/Micciancio12} (and implicitly in~\cite{DBLP:conf/stoc/Ajtai98}),
we first show via a pigeonhole argument that a dense lattice coset must exist, and then show how to sample such a coset efficiently (with good probability).

For a prime~$q$, a positive integer~$k$, and a set $S \subseteq \F_q$ of size~$n$ (with some arbitrary ordering of its elements), let $H = H_q(k, S) \in \F_q^{k \times n}$ be the parity-check matrix from \cref{def:parity-check-matrix}.
By \cref{lem:parity-lat-properties}, the lattice $\lat = \latpar(H) \subseteq \Z^n$ has $\det(\lat) \leq q^k$ integer cosets.
Recall that $B_{n,h}$ is the set of $n$-dimensional binary vectors of Hamming weight~$h$, which has cardinality $\card{B_{n,h}} = \binom{n}{h}$.
Therefore, by the pigeonhole principle, there must exist some integer coset $\vec{x} + \lat$ with $\card{(\vec{x} + \lat) \cap B_{n, h}} \geq \binom{n}{h}/q^k$ weight-$h$ binary vectors.
In particular, taking $n \approx q$, $h \approx \alpha^p \cdot (2k)$ for some constant $\alpha > 1/2^{1/p}$, and $k = q^{\eps}$ for a suitable small constant $\eps > 0$ implies the existence of a coset with roughly $q^{(2\alpha^p - 1) k} = q^{\Omega(q^{\eps})}$ such vectors.
These vectors have~$\ell_p$ norm~$h^{1/p} \approx \alpha \cdot (2k)^{1/p}$, whereas by \cref{thm:rs-min-dist} the lattice minimum distance is at least $(2k)^{1/p}$, yielding a local-density relative distance of roughly $\alpha$.

The following lemma extends the above existential result by showing that something very similar holds for a \emph{uniformly random} shift $\vec{x} \in B_{n, h}$: for any $\delta > 0$, the coset $\vec{x}+\lat$ contains at least $\delta \cdot \binom{n}{h}/q^k$ weight-$h$ binary vectors with probability greater than $1 - \delta$. The proof given below closely follows the structure of the very similar one of~\cite[Lemma~4.3]{journals/jacm/Khot05}.

\begin{lemma}
\label{lem:coset-sampling}
For a prime~$q$, positive integer~$k$, and set $S \subseteq \F_q$ of size~$n$, let $H = H_q(k, S) \in \F_q^{k \times n}$ be the parity-check matrix from \cref{def:parity-check-matrix}.
There is an efficient randomized algorithm that, for any $\delta \geq 0$, and on input~$H$ and any $h \in [n]$, outputs a shift $\vec{x} \in B_{n,h}$ such that \[  \Pr_{\vec{x}}\bracks[\Big]{\card{(\vec{x}+\lat) \cap B_{n, h}} \geq \delta \cdot \binom{n}{h}/q^k} > 1-\delta \ \text{.} \]
\end{lemma}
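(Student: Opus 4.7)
The plan is for the algorithm to simply output a uniformly random $\vec{x} \in B_{n,h}$, which can be sampled efficiently by choosing a uniformly random size-$h$ subset of the $n$ coordinates and using its indicator vector. The analysis will follow the same bucketing/pigeonhole pattern as \cite[Lemma~4.3]{journals/jacm/Khot05}, specialized to our parity-check lattice.

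First I would enumerate the cosets of $\lat = \latpar(H)$ that meet $B_{n,h}$ as $C_1, \ldots, C_m$, and set $a_i := \card{C_i \cap B_{n,h}}$, so that $\sum_i a_i = \card{B_{n,h}} = \binom{n}{h}$. By \cref{lem:parity-lat-properties}, the total number of cosets of $\lat$ in $\Z^n$ is at most $q^k$, so $m \leq q^k$. The crucial observation is that for every $\vec{x} \in C_i$, the set $(\vec{x}+\lat) \cap B_{n,h}$ equals $C_i \cap B_{n,h}$, so the random variable $N(\vec{x}) := \card{(\vec{x}+\lat) \cap B_{n,h}}$ is equal to $a_i$. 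Moreover, a uniform sample $\vec{x}$ lies in $C_i$ with probability exactly $a_i/\binom{n}{h}$.

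Setting the threshold $T := \delta \binom{n}{h}/q^k$, I would then bound the probability of the bad event $N(\vec{x}) < T$ by
\[
\Pr_{\vec{x}}[N(\vec{x}) < T] \;=\; \sum_{i \,:\, a_i < T} \frac{a_i}{\binom{n}{h}} \;\leq\; \frac{m \cdot T}{\binom{n}{h}} \;\leq\; \frac{q^k \cdot T}{\binom{n}{h}} \;=\; \delta.
\]
A short case analysis promotes this to a strict inequality when $\delta > 0$: either the index set is empty, in which case the bad probability is $0 < \delta$; or it is nonempty, in which case each $a_i < T$ is strict and so the sum is strictly less than $m \cdot T \leq q^k \cdot T$. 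Complementing yields $\Pr[N(\vec{x}) \geq T] > 1 - \delta$, which is the claimed bound.

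There is no real technical obstacle here: the proof amounts to a one-line pigeonhole argument once $B_{n,h}$ has been partitioned by cosets of $\lat$. The only feature specific to our construction is the determinant bound $\det(\lat) \leq q^k$ from \cref{lem:parity-lat-properties}; everything else is a generic statement about uniform sampling from a finite set partitioned into at most $q^k$ parts, so no properties of Reed-Solomon codes, symmetric polynomials, or the minimum distance bound from \cref{thm:rs-min-dist} are needed for this lemma.
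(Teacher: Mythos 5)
Your proposal is correct and matches the paper's proof essentially verbatim: sample a uniformly random $\vec{x} \in B_{n,h}$, partition $B_{n,h}$ by cosets (equivalently, syndromes of $H$), and bound the bad-event probability by the determinant bound $\det(\lat) \le q^k$ from \cref{lem:parity-lat-properties}. If anything, you are slightly more careful than the paper about why the final inequality is strict (the paper's chain of inequalities silently assumes $\delta > 0$ for the strict step, just as you observe), but this is the same argument.
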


\begin{proof}
The algorithm simply samples and outputs a uniformly random binary vector $\vec{x} \in B_{n, h}$. This is clearly efficient.
To show correctness, we will use the syndromes of~$H$.
For each $\vec{u} \in \F_q^{k}$, define $K_{\vec{u}} := \card{\set{\vec{z} \in B_{n, h} : H\vec{z} = \vec{u}}}$, and define $\vec{s} := H\vec{x} \in \F_q^k$ to be the syndrome corresponding to $\vec{x}$.
So, we need to prove that
$K_{\vec{s}} \geq \delta \cdot \binom{n}{h}/q^k$ with probability greater than $1 - \delta$.
Indeed, we have:
\begin{align*}
\Pr_{\vec{x}}\bracks[\Big]{\card{(\vec{x} + \lat) \cap B_{n, h}} < \delta \cdot \binom{n}{h}/q^k}
&= \Pr_{\vec{x}}\bracks[\Big]{K_{\vec{s}} < \delta \cdot \binom{n}{h}/q^k} \\
&= \sum_{\vec{u} \in \F_q^{k} : K_{\vec{u}} < \delta \cdot \binom{n}{h}/q^k} \Pr_{\vec{x}}[H\vec{x} = \vec{u}] \\
&= \sum_{\vec{u} \in \F_q^{k} : K_{\vec{u}} < \delta \cdot \binom{n}{h}/q^k}
\frac{K_{\vec{u}}}{\binom{n}{h}} \\
&< \sum_{\vec{u} \in \F_q^{k} : K_{\vec{u}} < \delta \cdot \binom{n}{h}/q^k} \frac{\delta}{q^k} \\
& \leq \delta \ ,
\end{align*}
where the first inequality uses the fact that the sum is over syndromes $\vec{u}$ with $K_{\vec{u}} < \delta \cdot \binom{n}{h}/q^k$, and the second inequality uses the fact that there are at most $q^k$ terms in the sum.
\end{proof}

\subsection{The Main Argument}
\label{sec:rs-main-argument}

\begin{theorem}[Locally dense lattices from Reed-Solomon codes]
\label{thm:ldl-from-rs}
For any $p \in [1, \infty)$ and constant $\alpha > 1/2^{1/p}$, there exists a randomized polynomial-time algorithm that, given any sufficiently large positive integer~$r$ in unary as input, outputs a $(p, \alpha, r, R=\poly(r))$-locally dense lattice (\cref{def:locally-dense}) with probability at least~$2/3$.
Moreover, the algorithm's output satisfies \cref{item:min-dist} of \cref{def:locally-dense} with probability~$1$.
\end{theorem}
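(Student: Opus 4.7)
The plan is to instantiate \cref{def:locally-dense} by taking $\lat = \latpar(H)$ for $H = H_q(k, \F_q)$ from \cref{def:parity-check-matrix}, with the prime~$q$, integers $k, h$, and dimension $n = q$ chosen as functions of~$r$ and the constant~$\alpha$. Both the basis matrix~$A$ of~$\lat$ and the length parameter $\ell := 2k$ are produced deterministically, and $\lambda_1^{(p)}(\lat) \geq \ell^{1/p}$ follows deterministically from \cref{thm:rs-min-dist}, so \cref{item:min-dist} of \cref{def:locally-dense} holds with probability~$1$. For \cref{item:cover-hypercube}, I first sample the shift~$\vec{x}$ via the procedure of \cref{lem:coset-sampling} to ensure that $V = (\vec{x}+\lat) \cap B_{n,h}$ is very large with high probability, and then sample the transform $T \in \bit^{r \times n}$ as in \cref{thm:prob-sauer} to ensure that $\bit^r \subseteq T(V)$ with high probability.

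For the parameter setup, let $\eta := 2\alpha^p - 1 > 0$, which is strictly positive by the hypothesis $\alpha > 1/2^{1/p}$. Fix a sufficiently small constant $\eps' \in (0,1)$ depending only on~$\alpha, p$ (chosen below), and let $k := \lceil C r^2 \rceil$ for a sufficiently large constant $C = C(\alpha, p, \eps')$. Let~$q$ be a prime with $q = \Theta(k^{1/\eps'})$ (which exists and is efficiently findable by Bertrand's postulate), set $n := q$, and let $h := \lfloor \alpha^p \cdot 2k \rfloor$. Then every $\vec{v} \in B_{n,h}$ has $\ell_p$ norm $h^{1/p} \leq \alpha \cdot (2k)^{1/p} = \alpha \cdot \ell^{1/p}$, and $\lat$ has rank $R = n = \poly(r)$.

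The main quantitative step is to verify that $V$ is large enough to apply \cref{thm:prob-sauer} with failure probability $\eps_0 := 1/6$. Sampling $\vec{x} \in B_{n,h}$ uniformly, \cref{lem:coset-sampling} with $\delta = 1/6$ gives $|V| \geq \frac{1}{6} \binom{q}{h}/q^k$ with probability at least $5/6$. Using $\binom{q}{h} \geq (q/h)^h$ and $h - k \geq \eta k - 1$, this yields $\log_q |V| \geq \eta k - h \log_q h - O(1)$. The Sauer threshold $|V| \geq h! \cdot n^{24 r \sqrt{h}/\eps_0}$ then follows (after taking $\log_q$ and using $h! \leq h^h$) from the inequality
\[
\eta k \geq 2 h \log_q h + 144 r \sqrt{h} + O(1) \ \text{.}
\]
Now $\log_q h \leq \eps' \log_k(2k) \leq 2\eps'$ and $h \leq 2\alpha^p k$, so $2 h \log_q h \leq 8 \alpha^p \eps' k$; taking $\eps' < \eta/(16\alpha^p)$ makes this at most $\eta k/2$. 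Since $\sqrt{h} = O(\sqrt{k})$, the term $144 r \sqrt{h}$ is at most $\eta k/4$ once $k \geq C r^2$ for $C$ sufficiently large. Thus \cref{thm:prob-sauer} applies, and a random~$T$ satisfies $\bit^r \subseteq T(V)$ with probability $\geq 5/6$. A union bound over the two independent samples yields overall success probability $\geq 2/3$.

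The main obstacle is this parameter balancing: the pigeonhole bound produces only about $q^{(2\alpha^p - 1) k}$ weight-$h$ vectors in the sampled coset, whereas \cref{thm:prob-sauer} demands at least $q^{\Theta(r\sqrt{k})}$ of them. The hypothesis $\alpha > 1/2^{1/p}$ enters in exactly one place, to make $\eta = 2\alpha^p - 1$ strictly positive so that the base~$q$ of the pigeonhole exponent can be raised to a power that dominates~$r\sqrt{k}$. Taking $k = \Theta(r^2)$ then achieves exactly this while keeping $q = k^{1/\eps'}$ polynomial in~$r$. Any weakening of \cref{thm:rs-min-dist} from $2k$ back to the trivial Reed-Solomon distance $k+1$ would force $\eta \leq 0$ and destroy the construction, so the factor-$2$ improvement for the $\ell_1$ distance is the critical ingredient.
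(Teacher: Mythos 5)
Your proposal is correct and follows essentially the same approach as the paper's proof: set $\eta = 2\alpha^p - 1 > 0$, take $\lat = \latpar(H_q(k,\F_q))$ with $\ell = 2k$, invoke \cref{thm:rs-min-dist} for the minimum distance, use \cref{lem:coset-sampling} to sample a dense coset, and apply \cref{thm:prob-sauer} to sample~$T$. The only differences are in the specific parameter choices---the paper takes $k \geq r^{1/(1/2-\delta)}$ (so that $r\sqrt{h} = o(k)$) and failure probabilities $1/10$, whereas you take $k = \Theta(r^2)$ with a large leading constant (so that $144 r\sqrt{h} \leq \eta k/4$) and failure probabilities $1/6$; both balance the pigeonhole count $q^{\eta k}$ against the Sauer threshold $q^{\Theta(r\sqrt{h})}$ and reach the same conclusion.
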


\begin{proof}
The algorithm starts by setting its parameters as follows.
It sets $\eps := 2\alpha^p - 1 > 0$, and chooses:
\begin{itemize}[itemsep=0pt]
    \item a $\poly(r)$-bounded integer $k \geq r^{1/(1/2 - \delta)}$ for some arbitrary constant $\delta \in (0, 1/2)$, and
    \item a $\poly(r)$-bounded prime~$q \geq k^{3 (1 + \eps)/\eps}$. (Such a prime~$q$ always exists by Bertrand's Postulate.)
\end{itemize} 
The algorithm then computes the components of a $(p, \alpha, r, R=q)$-locally dense lattice $(A, \ell, \vec{x}, T)$ as follows.
It lets:
\begin{itemize}[itemsep=0pt]
    \item $A \in \Z^{q \times q}$ be a basis of $\lat := \latpar(H)$, where $H=H(k,S) \in \F_q^{k \times q}$ for $S=\F_q$;\footnote{%
    For appropriate parameters, our argument works more generally for any sufficiently large subset $S \subseteq \F_q$, with $R=\card{S}$; we use $S=\F_q$ for simplicity.}
    \item $\ell := 2k$;
    \item $\vec{x} \in B_{q,h}$ be a uniformly random $q$-dimensional binary vector of Hamming weight~$h := \floor{(1 + \eps) k}$; 
    \item $T \in \bit^{r \times q}$ be chosen by independently setting each of its entries to be~$1$ with probability $1/(4hr)$, and to be~$0$ otherwise.
\end{itemize}
It then outputs $(A, \vec{x}, \ell, T)$.

We first analyze the algorithm's running time. A suitable prime~$q$ can be found in $\poly(r)$ time using, e.g., trial division (recall that~$r$ is given in unary). The basis~$A$ can be computed in deterministic polynomial time from the generating set of column vectors $(B \mid q I_q)$, where~$B$ is a basis of $\ker(H) \subseteq \F_q^q$ (lifted to the integers).
It is clear that~$\ell$ can be computed in deterministic polynomial time, and that~$\vec{x}$ and~$T$ can be computed in randomized polynomial time. So, the algorithm runs in randomized polynomial time.

It remains to show correctness, i.e., that $(A, \vec{x}, \ell, T)$ satisfies the two conditions in \cref{def:locally-dense} with suitable probability over the random choices of~$\vec{x}$ and~$T$.
First, \cref{item:min-dist} is always satisfied, because by \cref{thm:rs-min-dist} we have
\[ \lambda_1(\lat) \geq (2k)^{1/p} = \ell^{1/p} \ \text{.} \]

In the rest of the proof we consider \cref{item:cover-hypercube} of \cref{def:locally-dense}. Let $W := (\vec{x} + \lat) \cap B_{q,h}$.
Because
\[ \norm{\vec{w}}_p^p = h \leq (1+\eps)k = \alpha^p \cdot \ell \]
for each $\vec{w} \in W$, we have $W \subseteq V := (\vec{x}+\lat) \cap \B_p^{q}(\alpha \cdot \ell^{1/p})$.

By \cref{lem:coset-sampling},
$\Pr_{\vec{x}}[\card{W} \geq \binom{q}{h}/(10 q^k)] > 1 - 1/10 = 9/10$.
If this event holds, and
\begin{equation}
\label{eq:parameters-apply-sauer}
\frac{\binom{q}{h}}{10 q^k} \geq h! \cdot q^{240 r \sqrt{h}} \ \text{,}
\end{equation}
then by \cref{thm:prob-sauer} we have $\bit^r \subseteq T(W) \subseteq T(V)$ with probability at least $1 - 1/10 = 9/10$ (over the choice of~$T$).
So, it suffices to show that the condition in \cref{eq:parameters-apply-sauer} holds for all sufficiently large~$k$, and hence for all sufficiently large~$r$.
By taking a union bound over the $1/10$ failure probabilities from \cref{lem:coset-sampling} and \cref{thm:prob-sauer}, we get that the algorithm's overall success probability is at least $1 - 2/10 > 2/3$ for all sufficiently large~$r$, as needed.

Using the standard bound $\binom{q}{h} \geq (q/h)^h$ for binomial coefficients and that $h \geq (1 + \eps)k - 1$, we have that
\begin{equation} \label{eq:parameters-setting-lb}
\frac{\binom{q}{h}}{10 q^k} \geq \frac{q^{h-k}}{10 h^h}
= \Omega \Big(\frac{q^{\eps k - 1}}{h^h} \Big) \ \text{.}
\end{equation}
Furthermore, by the choice of~$k$ relative to~$r$ and $h \leq (1+\eps)k$, we have that
\begin{equation} \label{eq:parameters-setting-ub}
h! \cdot q^{240 r \sqrt{h}} \leq h^h \cdot q^{240 k^{1/2 - \delta} \sqrt{(1 + \eps)k}} \leq h^h \cdot q^{o(k)} \ \text{.}
\end{equation}
So, by combining \cref{eq:parameters-setting-lb,eq:parameters-setting-ub}, in order to establish \cref{eq:parameters-apply-sauer} it suffices to show that $q^{(1-o(1)) \eps k} \geq h^{2h}$. By taking logs, this is equivalent to
\begin{equation}
\label{eq:parameters-setting-after-logs}
(1-o(1)) \cdot \eps k \log q \geq 2h \log h \ \text{.}
\end{equation}
Finally, using that $k^{3(1+\eps)/\eps} \leq q \leq \poly(k)$ and $h \leq (1 + \eps)k$, in order for \cref{eq:parameters-setting-after-logs} to hold it suffices to have 
\[
(1-o(1)) \cdot \eps k \cdot \frac{3 (1 + \eps)}{\eps} \cdot \log k
= (3 - o(1)) \cdot (1 + \eps) \cdot k \log k
\geq 2 (1 + \eps) \cdot k \log k + O(k) \ \text{,}
\]
which indeed holds for all sufficiently large~$k$, as needed.
\end{proof}

We emphasize that \cref{thm:ldl-from-rs} uses randomness only to sample~$\vec{x}$ and~$T$.
As an immediate corollary, we obtain our main hardness result, \cref{thm:svp-hardness}---which, to recall, asserts that for all constants $p \in [1, \infty)$ and $\gamma < 2^{1/p}$, there is no polynomial-time algorithm for $\gamma$-$\GapSVP_p$ unless $\NP \subseteq \RP$.

\begin{proof}[Proof of \cref{thm:svp-hardness}]
Combine \cref{item:one-sided-error-svp-hardness} of \cref{cor:svp-hardness} with \cref{thm:ldl-from-rs}.
\end{proof}

\section{Efficient Decoding Near Minkowski's Bound}
\label{sec:decoding}

In this section, we show that a recent result of Mook and Peikert~\cite{DBLP:journals/tit/MookP22}, which builds on work of Guruswami and Sudan~\cite{DBLP:journals/tit/GuruswamiS99} and Koetter and Vardy~\cite{DBLP:journals/tit/KoetterV03a} on list-decoding Reed-Solomon codes, yields a polynomial-time algorithm for decoding lattices $\lat = \latpar(H)$ with $H = H_q(k, \F_q)$ up to distance $\Theta(\sqrt{k})$.
We additionally observe that by choosing $k = \Theta(q/\log q)$, such lattices are asymptotically nearly tight with Minkowski's bound (\cref{eq:minkowski}).
Putting these observations together, we obtain an efficient algorithm for decoding to a distance within a $O(\sqrt{\log q})$ factor of Minkowski's bound (here $q = n$ is the lattice rank and dimension).

\subsection{Construction and Algorithm}
\label{subsec:minowski-alg}

Define the additive quotient group $\R_q := \R/(q \Z)$ and the Euclidean norm of any $\hat{\vec{y}} \in \R_q^n$ as
\begin{equation}
\label{eq:Rq-norm}
\norm{\hat{\vec{y}}} := \min \set{\norm{\vec{y}} : \vec{y} \in \hat{\vec{y}} + q \Z^n} \ \text{.}
\end{equation}
Equivalently, $\norm{\hat{\vec{y}}}$ is the standard $\R^n$ Euclidean norm of the unique real vector $\vec{y} \equiv \hat{\vec{y}} \pmod{q \Z^n}$ having coordinates in $[-q/2, q/2)$.
In additive arithmetic that mixes elements of~$\F_q$ and~$\R_q$, we implicitly `lift' the former to the latter in the natural way.

We again use the fact that for evaluation set $S = \F_q$, the matrix $H = H_q(k, \F_q)$ defined in \cref{eq:parity-check} is a parity-check matrix of the Reed-Solomon code $\RS_q[q - k, \F_q]$, and therefore $\latpar(H_q(k, \F_q)) = \RS_{q}[q - k, \F_q] + q \Z^q$.
This view lets us take advantage of the decoding algorithm from the following theorem of~\cite{DBLP:journals/tit/MookP22}, which gives an efficient (list) decoder in the~$\ell_2$ norm for Reed-Solomon codes.\footnote{\label{foot:decoding-notes}In fact, the cited result from \cite{DBLP:journals/tit/MookP22} is more general, giving a decoder for \emph{$\F_p$-subfield subcodes} of Reed-Solomon codes over finite fields of order $q=p^r$, for a prime~$p$.
Here we need only the special case where the Reed-Solomon code is over a prime field (i.e., where $r = 1$).
On the other hand, we note that if \cref{prop:efficient-rs-decoding} were extended to handle \emph{generalized} Reed-Solomon codes, then we would get a corresponding strengthening of \cref{cor:efficient-consARS-decoding} for decoding lattices $\latpar(H_q(k,S))$ with general $S$, not just $S = \F_q$.}

\begin{proposition}[{\cite[Algorithm~1 and Theorem~3.4]{DBLP:journals/tit/MookP22}}] \label{prop:efficient-rs-decoding}
Let~$q$ be a prime, $S \subseteq \F_q$ be an evaluation set of size $n := \card{S}$, $k \leq n$ be a nonnegative integer, and $\eps > 0$.
There is a deterministic algorithm that, on input~$q$,~$S$,~$k$,~$\eps$,
and a vector $\hat{\vec{y}} \in \R_q^n$, outputs all codewords $\vec{c} \in \RS_{q}[n - k, S]$ such that $\norm{\hat{\vec{y}} - \vec{c}}^2 \leq (1 - \eps) (k + 1)/2$, in time polynomial in $n$, $\log q$, and $1/\eps$.\footnote{\label{foot:poly-received-word}Formally, the runtimes of the decoding algorithms in \cref{prop:efficient-rs-decoding,cor:efficient-consARS-decoding} additionally depend on the lengths of the respective ``received words'' $\hat{\vec{y}}$ and $\vec{y}$ that they take as input, which must be specified to finite precision. However, for simplicity we describe the algorithms in the ``Real RAM model,'' while noting that their runtime dependence on the encoding lengths of $\hat{\vec{y}}, \vec{y}$ is polynomial.} 
\end{proposition}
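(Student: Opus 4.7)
Since this proposition is cited from \cite{DBLP:journals/tit/MookP22}, the strategy is to reduce Euclidean list-decoding in $\R_q^n$ to algebraic soft-decision list decoding of Reed-Solomon codes, following Koetter and Vardy's extension of Guruswami-Sudan. My plan is first to convert the real received word $\hat{\vec{y}} \in \R_q^n$ into a nonnegative \emph{reliability matrix} $M \in \R^{n \times q}$ whose entry $M_{i,\alpha}$ is an affine-decreasing function of the squared $\R_q$-distance between $\hat{y}_i$ and $\alpha$---for instance, $M_{i,\alpha} := A - d_{\R_q}(\hat{y}_i, \alpha)^2$ for a constant $A$ chosen so that all entries are nonnegative. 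With this choice, the ``score'' $\mathrm{sc}(\vec{c}) := \sum_i M_{i, c_i}$ of any codeword $\vec{c}$ satisfies $\mathrm{sc}(\vec{c}) = nA - \norm{\hat{\vec{y}} - \vec{c}}^2$, so closeness to $\hat{\vec{y}}$ in $\R_q^n$ is equivalent to having large score.

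Next I would feed $M$, quantized to integer multiplicities by scaling by a large integer $N$ and rounding, to the Koetter-Vardy soft-decision list decoder for $\RS_q[n-k, S]$. That algorithm runs in time polynomial in $n$, $\log q$, and $N$, and its soundness guarantee---via bivariate-polynomial interpolation plus a Cauchy-Schwarz bound on the multiplicity matrix---ensures that the output list contains every codeword whose score exceeds roughly $\sqrt{(k-1) \cdot \norm{M}_F^2}$.

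The crux is then a quantitative check: any $\vec{c}$ with $\norm{\hat{\vec{y}} - \vec{c}}^2 \leq (1-\eps)(k+1)/2$ must meet the Koetter-Vardy threshold. Substituting the affine identity for $\mathrm{sc}(\vec{c})$ and observing that $\norm{M}_F^2$ depends only on $\hat{\vec{y}}$ (not on $\vec{c}$), one tunes the constant $A$ against $\sqrt{(k-1) \cdot \norm{M}_F^2}$ to match exactly the claimed $(1-\eps)(k+1)/2$ decoding radius. I expect the main obstacles to be: (i) handling the $\R_q$ wrap-around so that $d_{\R_q}(\cdot,\cdot)^2$ behaves like a genuine Euclidean seminorm in the identity for $\mathrm{sc}(\vec{c})$; and (ii) keeping the quantization error from $M$ to integer multiplicities below an $\eps$ fraction of the decoding radius, without forcing $N$ to grow beyond $\mathrm{poly}(n, \log q, 1/\eps)$ and thereby breaking the stated polynomial run time.
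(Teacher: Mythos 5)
The paper offers no proof of this proposition; it is cited verbatim as a black box from Mook and Peikert, whose argument does follow your high-level strategy of reducing $\ell_2$-decoding over $\R_q^n$ to Koetter--Vardy soft-decision list decoding of Reed--Solomon codes. However, your specific instantiation has a gap that is in fact the main technical hurdle the cited proof is designed to overcome.

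The untruncated affine reliability matrix $M_{i,\alpha} := A - d_{\R_q}(\hat y_i, \alpha)^2$ cannot work for any admissible $A$. Nonnegativity of all entries forces $A \geq (q/2)^2$, and then a constant fraction of the $q$ entries in each row of $M$ are of order $A$, so $\norm{M}_F^2 = \Theta(n q A^2)$. The Koetter--Vardy threshold is roughly $\sqrt{(\kappa - 1)\,\norm{M}_F^2}$ where $\kappa = n-k$ is the \emph{dimension} of $\RS_q[n-k,S]$ (not $k$, as you wrote), giving a threshold of $\Theta(A\sqrt{(n-k) n q})$; this already exceeds the maximum possible score $nA$ unless $(n-k)q = O(n)$, which fails for every prime $q \geq 2$. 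Nor does tuning $A$ help: for $A$ above the forced floor, both the score and the threshold scale linearly in $A$, so the ratio is stuck. The missing idea is that the reliability matrix must be made \emph{sparse}---e.g., by truncating entries to zero for every $\alpha$ beyond some bounded $\R_q$-distance from $\hat y_i$---so that each row has only a bounded number of nonzero entries and $\norm{M}_F^2$ is kept near $n$ rather than $n q A^2$. Only after such sparsification (which is what Mook and Peikert actually do) does the Johnson/Koetter--Vardy numerology land on the claimed $\ell_2^2$ radius $(1-\eps)(k+1)/2$. The two obstacles you flagged, the $\R_q$ wrap-around and the quantization of $M$ to integer multiplicities, are genuine but secondary; the obstacle that actually determines the achievable decoding radius is the control of $\norm{M}_F$.
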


The following corollary, which is the main result of this section, says that by taking $S = \F_q$ and $k = \Theta(q/\log q)$, (1)~the root Hermite factor of $\latpar(H)$ is within an $O(\sqrt{\log q})$ factor of Minkowski's bound (\cref{eq:minkowski}), and (2)~it is possible to efficiently decode this lattice to a distance of $\Omega(\sqrt{k}) = \Omega(\sqrt{q/\log q})$, which is again within an $O(\sqrt{\log q})$ factor of Minkowski's bound.

We remark that by setting $\eps \leq 1/(k+1)$ in \cref{cor:efficient-consARS-decoding}, we get efficient decoding to a distance at least $\sqrt{k/2}$ but less than $\sqrt{(k+1)/2}$, which is slightly more than half the lower bound of $\sqrt{2k}$ on the minimum Euclidean distance of the lattice (\cref{thm:rs-min-dist}). Recall that this lower bound is tight when~$k$ is a proper divisor of~$q-1$ (see \cref{lem:rs-min-dist-upper}), so with this parameterization we get efficient \emph{list} decoding (i.e., the algorithm may return more than one lattice vector) slightly beyond the unique-decoding bound of half the minimum distance.

\begin{corollary}[Efficient decoding near Minkowski's bound] \label{cor:efficient-consARS-decoding}
Let $H = H_q(k, \F_q)$ for a prime $q$ and $k := \floor{q/(2 \log q)} \leq q/2$, where all logarithms are base two. Then for $\lat := \latpar(H) \subseteq \Z^q$:
\begin{enumerate}
    \item \label{item:high-density} $\sqrt{q/\log q - 2} \leq \sqrt{2 k} \leq \lambda_1(\lat) \leq \sqrt{q} \cdot \det(\lat)^{1/q} \leq \sqrt{2 q}$.

    \item \label{item:efficient-decoding} For any $\eps > 1/\poly(q)$, there is an algorithm that, on input~$q$ and a vector $\vec{y} \in \R^q$, outputs all lattice vectors $\vec{v} \in \lat$ satisfying $\norm{\vec{y} - \vec{v}} \leq \sqrt{(1-\eps)(k+1)/2}$ in time polynomial in~$q$.
\end{enumerate}
\end{corollary}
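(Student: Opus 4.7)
\medskip

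\noindent\textbf{Proof proposal.} The plan is to prove the two parts essentially independently, with Part~\ref{item:high-density} reducing to the minimum-distance bound of \cref{thm:rs-min-dist} together with a direct determinant calculation, and Part~\ref{item:efficient-decoding} reducing decoding of $\lat$ to the Reed-Solomon list-decoding algorithm of \cref{prop:efficient-rs-decoding} via the identification $\lat = \RS_q[q-k,\F_q] + q\Z^q$ and a coordinate-wise rounding step to lift codewords back to $\Z^q$.

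For Part~\ref{item:high-density}, first I would apply \cref{thm:rs-min-dist} with $p=2$ to get $\lambda_1(\lat) \geq \sqrt{2k}$, and then bound $2k \geq q/\log q - 2$ using $k \geq q/(2 \log q) - 1$. Minkowski's bound (\cref{eq:minkowski}) gives $\lambda_1(\lat) \leq \sqrt{q} \cdot \det(\lat)^{1/q}$ for free. For the determinant, since $H = H_q(k,\F_q)$ is a transposed Vandermonde matrix with $k \leq q$ distinct-column evaluations, its rows are linearly independent, so \cref{lem:parity-lat-properties} gives $\det(\lat) = q^k$ and hence $\det(\lat)^{1/q} = q^{k/q} = 2^{(k/q)\log_2 q}$. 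The choice $k \leq q/(2\log_2 q)$ yields $(k/q)\log_2 q \leq 1/2$, so $\det(\lat)^{1/q} \leq \sqrt{2}$, which gives the $\sqrt{2q}$ upper bound. This part is almost mechanical.

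For Part~\ref{item:efficient-decoding}, the algorithm on input $\vec{y} \in \R^q$ will: (i) reduce $\vec{y}$ coordinate-wise modulo $q$ to obtain $\hat{\vec{y}} \in \R_q^q$; (ii) run the algorithm of \cref{prop:efficient-rs-decoding} with evaluation set $S=\F_q$, dimension $q-k$, and the given $\eps$, producing the list $L$ of all Reed-Solomon codewords $\vec{c}$ with $\norm{\hat{\vec{y}}-\vec{c}}^2 \leq (1-\eps)(k+1)/2$; and (iii) for each $\vec{c} \in L$, compute the unique integer vector $\vec{z} = \vec{z}(\vec{c})$ minimizing $\norm{\vec{y} - \vec{c} - q\vec{z}}$ by rounding $(\vec{y}-\vec{c})/q$ coordinate-wise to the nearest integer, and output $\vec{v} = \vec{c} + q\vec{z}$ if it satisfies $\norm{\vec{y}-\vec{v}}^2 \leq (1-\eps)(k+1)/2$. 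For correctness, any lattice vector $\vec{v} \in \lat$ in the target ball splits uniquely as $\vec{v} = \vec{c} + q\vec{z}$ with $\vec{c} \in \RS_q[q-k,\F_q]$, and by definition of the $\R_q$-norm in \cref{eq:Rq-norm} we have $\norm{\hat{\vec{y}}-\vec{c}} \leq \norm{\vec{y}-\vec{v}}$, so $\vec{c}$ must appear in $L$. Conversely, every $\vec{v}$ produced by the algorithm lies in $\lat$ and in the target ball.

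The only nontrivial point, and the part I would check carefully, is that the rounding step recovers every valid lattice vector rather than missing some. This reduces to showing that for each codeword $\vec{c} \in L$ there is at most one $\vec{z} \in \Z^q$ with $\vec{c} + q\vec{z}$ inside the ball of radius $r := \sqrt{(1-\eps)(k+1)/2}$ around $\vec{y}$; equivalently, that $r < q/2$, so that in each coordinate at most one integer translate can achieve $|y_i - c_i - qz_i| \leq r$. For $k \leq q/(2\log_2 q)$ and $q$ at least some absolute constant this is clear, since $r^2 \leq (k+1)/2 = O(q/\log q) \ll q^2/4$. The runtime is then polynomial in $q$ by \cref{prop:efficient-rs-decoding} (with $1/\eps \leq \poly(q)$) together with $O(q \cdot |L|)$ work for the rounding and checking; polynomial list size of $L$ is guaranteed by the cited decoding proposition.
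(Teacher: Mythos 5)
Your proposal is correct and matches the paper's own proof essentially step for step: Part~1 combines \cref{thm:rs-min-dist}, \cref{lem:parity-lat-properties}, and Minkowski's bound with the same parameter arithmetic, and Part~2 reduces modulo $q\Z^q$, invokes \cref{prop:efficient-rs-decoding}, and lifts codewords back via the nearest representative in $\vec{c}+q\Z^q$, with the same key observation that $r < q/2$ makes this lift well-defined and bijective on the ball. The explicit distance re-check you add in step~(iii) is harmless but redundant, since the bijection already guarantees that each returned vector lies within the target radius.
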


\begin{proof}
For \cref{item:high-density}, we have
\[
\sqrt{q/\log q - 2} \leq \sqrt{2k} \leq \lambda_1(\lat) \leq \sqrt{q} \cdot \det(\lat)^{1/q} = \sqrt{q} \cdot q^{k/q} \leq \sqrt{2 q} \ \text{.}
\]
The first inequality follows from the choice of $k$, the second inequality is by \cref{thm:rs-min-dist}, the third inequality is Minkowski's bound (\cref{eq:minkowski}), the equality follows from \cref{lem:parity-lat-properties} (recall that the rows of $H$ are linearly independent), and the final inequality again follows from the choice of $k$.\footnote{Analyzing the derivative of $\log(\sqrt{2k}/q^{k/q})$
with respect to~$k$ shows that our choice of~$k$ is asymptotically optimal for maximizing the root Hermite factor of $\latpar(H_q(k, \F_q))$.}

The algorithm claimed in \cref{item:efficient-decoding} works as follows.
First, it computes $k$ and
$\hat{\vec{y}} = \vec{y} \bmod q\Z^q \in \R_q^q$ from the input $q$ and $\vec{y}$.
It then calls the algorithm from \cref{prop:efficient-rs-decoding} on $q$, $S = \F_q$, $k$, $\eps$, and $\hat{\vec{y}}$, and receives as output zero or more codewords $\vec{c} \in \RS_{q}[q - k, \F_q]$.
For each such~$\vec{c}$, it outputs the unique vector
$\vec{v} := \argmin_{\vec{v}' \in \vec{c} + q\Z^q} \norm{\vec{y} - \vec{v}'} \in \lat$.

The value $k$ and
vectors $\hat{\vec{y}}$, $\vec{v}$ can be computed efficiently (assuming that~$\vec{v}$ is well defined), so it is clear from \cref{prop:efficient-rs-decoding} that this algorithm runs in time polynomial in $q$ (recall that the dimension $n = q$).
It remains to show correctness.
First, it is immediate from the definitions that for any $r < q/2$, the function $f(\vec{v}) = \vec{v} \bmod q \Z^q$ is a bijection from the set of lattice vectors
\[
\set{\vec{v} \in \lat : \norm{\vec{y} - \vec{v}} \leq r} \ \text{,}
\]
to the set of codewords
\[
\set{\vec{c} \in \RS_{q}[q - k, \F_q] : \norm{\hat{\vec{y}} - \vec{c}} \leq r} \ \text{,}
\]
and that $g(\vec{c}) := \argmin_{\vec{v}' \in \vec{c} + q\Z^q} \norm{\vec{y} - \vec{v}'}$ is the inverse function of~$f$, i.e., $g=f^{-1}$.
Moreover, because $q \geq 2$, we have that the decoding distance~$r$ satisfies
\[
r := \sqrt{(1 - \eps)(k + 1)/2} \leq \sqrt{(1 - \eps) (q/(2 \log q) + 1)/2} \leq \sqrt{1 - \eps} \cdot q/2 < q/2 \ \text{.}
\]
Because the algorithm from \cref{prop:efficient-rs-decoding} outputs (exactly) $\set{\vec{c} \in \RS_{q}[q - k, \F_q] : \norm{\hat{\vec{y}} - \vec{c}} \leq r}$, it follows that the algorithm described above outputs (exactly) $\set{\vec{v} \in \lat : \norm{\vec{y} - \vec{v}} \leq r}$, as needed.
\end{proof}

\begin{remark}
We remark that the main consequence of \cref{item:high-density} of \cref{cor:efficient-consARS-decoding}---namely, an explicit construction of a family of lattices having root Hermite factors within a $O(\sqrt{\log n})$ factor of Minkowski's bound, obtained via Construction~A (where $n$ is the lattice dimension)---only needs a family of codes satisfying milder conditions than what (generalized) Reed-Solomon codes satisfy.
Namely, achieving this result only requires a family of linear $q$-ary codes~$\mathcal{C}$ for prime~$q$ with block length~$n$, codimension $k = \Theta(n/\log n)$, and minimum distance (in the Hamming metric) $d = \Omega(k)$.
The latter is a weaker condition than \emph{maximum distance separability}~(MDS), which requires that $d=k+1$.
Indeed, $d = \Omega(k)$ implies that the corresponding Construction-A lattice $\mathcal{C} + q \Z^n$ has an~$\ell_2$ minimum distance of $\Omega(\min \set{\sqrt{k},q})$, which is $\Omega(\sqrt{k})$ when $k=O(q^2)$.
So, unlike our main hardness result, \cref{cor:efficient-consARS-decoding} does not use
\cref{thm:rs-min-dist}
in any essential way.

Finally, we also note that obtaining a direct analog of \cref{item:efficient-decoding} of \cref{cor:efficient-consARS-decoding}---i.e., efficiently decoding to within an $O(\sqrt{\log n})$ factor of Minkowski's bound on $\mathcal{C} + q \Z^n$---additionally requires an efficient algorithm for decoding~$\mathcal{C}$ to an~$\ell_2$ distance of $\Omega(\sqrt{k})$, but that this is in turn a weaker requirement than what \cref{prop:efficient-rs-decoding} fulfills.
\end{remark}
\section{Attempted Derandomization}
\label{sec:derand}

In this section we describe some of our unsuccessful attempts, and associated barriers, to adapt the \emph{randomized} locally dense lattice construction from \cref{sec:ldl-reed-solomon} into a \emph{deterministic} one.
More specifically: for some $p \in [1,\infty)$ and $\alpha \in (0,1)$, our goal is a deterministic algorithm that, given any positive integer~$r$ (in unary), outputs a $(p,\alpha,r,R=\poly(r))$-locally dense lattice in $\poly(r)$ time.
Recall from \cref{def:locally-dense} that this requires constructing an explicit rank-$R$ lattice and ``dense coset'' (both integral) containing at least~$2^r$ vectors that have norm at most~$\alpha$ times (a known lower bound on) the lattice's minimum distance.\footnote{\label{foot:dense-coset-not-enough}Recall that a locally dense lattice must also come with a suitable linear transform~$T$.
In this section we do not address the construction of such~$T$, because deterministically constructing a lattice and dense coset is already challenging enough.
We hope that such a construction and its analysis will naturally reveal a suitable choice of~$T$ as well.}

As in \cref{sec:ldl-reed-solomon}, we focus on the case where the lattice has the form $\lat = \latpar(H) \subseteq \Z^S$, where $H=H_q(k,S)$ for some prime~$q$, positive integer~$k$, and set $S \subseteq \F_q$ of cardinality $n=R=\card{S}$ (which all may be set by the algorithm).
Recall from \cref{sec:rs-dense-cosets} that for appropriate parameters, there exists a dense coset $\vec{x}+\lat$ having $n^{\Omega(k)}$ short vectors (indeed, binary vectors of a particular Hamming weight), and that an appropriate random choice of coset has this property with good probability.
To derandomize, we need to deterministically construct a dense integral coset $\vec{x}+\lat$, or equivalently, its syndrome $\vec{u} = H \vec{x} \in \F_q^k$.
This is the problem that we address in this section.
We thank Swastik Kopparty~\cite{kopparty_personal20} for explaining its connections to the approaches considered in \cref{sec:binary-rs-list,sec:fourier}.

As a first observation, we note that \emph{not every} syndrome (coset) is suitable, and some unsuitable ones are easy to describe.
Clearly, the all-zeros syndrome~$\vec{0} \in \F_q^k$ is unsuitable, because it corresponds to (the zero coset of) the lattice~$\lat$ itself, which has only one vector shorter than the minimum distance.
More generally, by the triangle inequality, for any $\vec{x} \in \Z^S$ such that $\norm{\vec{x}} < (1 - \alpha) \lambda_1(\lat)$, the lattice coset $\vec{x} + \lat$ has at most one vector of length at most $\alpha \lambda_1(\lat)$, namely $\vec{x}$ itself (if it is short enough).
So, when $\lambda_1(\lat) \geq \sqrt{2k}$ (e.g., as implied by \cref{thm:rs-min-dist}), if $\norm{\vec{x}} < (1 - \alpha) \sqrt{2k}$ then the coset $\vec{x}+\lat$ has at most one vector of length at most $\alpha \sqrt{2k}$.
(A similar argument holds for all other~$\ell_p$ norms.)
Stated in terms of syndromes: no ``too small'' integer linear combination of the columns of~$H$ is a suitable syndrome, because it corresponds to shifting the lattice by ``too little.''

\subsection{Binary Coset Vectors and Reed-Solomon List Decoding Bounds}
\label{sec:binary-rs-list}

Optimistically, we might hope to show a close deterministic analog of the probabilistic result from \cref{lem:coset-sampling}, namely, an efficient deterministic construction of a parity-check matrix $H=H_q(k,S)$ and a shift $\vec{x} \in \Z^S$ where the coset $\vec{x} + \latpar(H)$ has sufficiently many \emph{binary} vectors of some specified Hamming weight
$h \leq \alpha \cdot (2k) \leq \alpha \cdot \lambda_1^{(1)}(\latpar(H))$ for some positive constant $\alpha < 1$.

However, even with a relaxed requirement of $h \leq 2k$, the hoped-for deterministic analog would imply explicit Reed-Solomon list-decoding configurations that go beyond the current state of the art, due to Guruswami and Rudra~\cite{DBLP:journals/tit/GuruswamiR06}.
While this does not rule out the desired analog entirely, it does present an apparent barrier:
achieving the analog would likely require new techniques, and would lead to a significant step forward in our understanding of Reed-Solomon codes.

\begin{lemma}
\label{lem:binary-rs-list}
There is an efficient deterministic algorithm that, given a prime~$q$, a positive integer~$k \leq q$, a set $S \subseteq \F_q$ of cardinality $\card{S} \geq k$ defining $H=H_q(k,S)$ and $\lat=\latpar(H) \subseteq \Z^S$, an integer $h \in \set{k, \ldots, \card{S}}$, and any shift $\vec{x} \in \Z^S$ (or its syndrome $\vec{u}=H\vec{x} \in \F_q^k$), outputs a vector $\vec{r} \in \F_q^S$ for which there are at least $\card{(\vec{x}+\lat) \cap B_{S,h})}$ codewords in the Reed-Solomon code $\RS_q[h-k+1, S]$ that each agree with~$\vec{r}$ in at least~$h$ coordinates.
\end{lemma}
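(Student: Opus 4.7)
The plan is to pass from the syndrome $\vec{u} = H\vec{x} \in \F_q^k$ to a received word $\vec{r} \in \F_q^S$ by isolating the ``known part'' of the root polynomial of every admissible support. I would begin by observing that a weight-$h$ binary vector $\vec{v} \in B_{S,h}$ with support $T \subseteq S$ lies in the coset $\vec{x} + \lat$ if and only if $H \vec{v} = \vec{u}$, which, given the definition of $H = H_q(k,S)$, is exactly the system $p_i(T) = u_i$ for all $i \in [k]$. Because $k \leq q$, the integers $1, 2, \ldots, k-1$ are invertible in $\F_q$, so Newton's identities (applied as in \cref{clm:power-sums-elementary-symmetric}) convert this into the statement that $e_0(T), e_1(T), \ldots, e_{k-1}(T)$ are prescribed, taking common values $\tilde e_0, \tilde e_1, \ldots, \tilde e_{k-1} \in \F_q$ that depend only on $\vec{u}$.

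Since the root polynomial of any admissible $T$ expands as
$$f_T(x) = \prod_{t \in T}(x - t) = \sum_{i=0}^{h} (-1)^i e_i(T)\, x^{h-i},$$
its top $k$ coefficients are identical across all such $T$. This motivates defining
$$F(x) := \sum_{i=0}^{k-1} (-1)^i \tilde e_i\, x^{h-i} \in \F_q[x],$$
and having the algorithm output $\vec{r} \in \F_q^S$ given by $r_s := -F(s)$. The computation is just a Newton-identity recursion to obtain the $\tilde e_i$ from $\vec{u}$, followed by one polynomial evaluation per $s \in S$, so it is clearly polynomial time.

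For correctness, for each weight-$h$ binary $\vec{v} \in \vec{x} + \lat$ with support $T$ I would set $g_T := f_T - F$. Cancellation of the leading $k$ coefficients gives $\deg g_T \leq h-k$, so $(g_T(s))_{s \in S}$ is a codeword of $\RS_q[h-k+1, S]$; and for every $t \in T$ we have $g_T(t) = -F(t) = r_t$, so the codeword agrees with $\vec{r}$ in at least the $h$ coordinates of $T$. Distinct coset vectors have distinct supports $T$, hence distinct root polynomials $f_T$, hence distinct $g_T$; moreover, the Reed-Solomon evaluation map on polynomials of degree less than $h-k+1$ is injective because $\card{S} \geq h \geq h-k+1$, so these distinct $g_T$ yield distinct codewords. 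The tally of codewords is therefore at least $\card{(\vec{x}+\lat) \cap B_{S,h}}$, as required.

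I do not foresee a genuine obstacle in this plan. The only conceptual move is recognizing that Newton's identities ``freeze'' the top $k$ coefficients of the root polynomial once the syndrome is fixed; after that, the construction of $\vec{r}$ and the injectivity of $T \mapsto g_T \mapsto (g_T(s))_{s \in S}$ are essentially forced. The more substantial question---which this lemma is meant to highlight---is whether one can go further and \emph{deterministically} produce a syndrome $\vec{u}$ for which $\card{(\vec{x}+\lat) \cap B_{S,h}}$ is very large, since doing so would match or beat the Guruswami--Rudra list-decoding lower bounds.
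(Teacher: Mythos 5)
Your proof is correct and follows essentially the same route as the paper's: recover the elementary symmetric values from the syndrome via Newton's identities, form the degree-$(h\text{-down-to-}h{-}k{+}1)$ ``known part'' of the root polynomial, evaluate its negation on $S$ to get $\vec{r}$, and note that each admissible support $T$ contributes the residual polynomial $g_T = f_T - F$ of degree $\leq h - k$ as an agreeing codeword. The only (harmless) addition is your explicit check that the Reed--Solomon evaluation map is injective, which the paper leaves implicit.
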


\begin{proof}
The algorithm works as follows.
\begin{enumerate}[itemsep=0pt]
    \item Compute the coset's syndrome $\vec{u} = H \vec{x} \in \F_q^k$.

    Note that for each $i \in [k]$, the syndrome's $i$th coordinate $u_i$ is the $i$th power sum of the subset $T \subseteq S$ indicated by any binary vector $\vec{x}_T \in (\vec{x}+\lat) \cap B_{S,h}$, i.e., $u_i = (H\vec{x})_i = (H \vec{x}_T)_i = p_i(T) \in \F_q$.
    In particular, $h = \card{T} \equiv u_0 \pmod{q}$ for any such~$T$.
    (So, if $h \not\equiv u_0 \pmod{q}$, no such~$T$ exists and the algorithm may output any vector.)

    \item \label{item:elesymm-from-newton} Use Newton's identities (\cref{eq:newtons-identities}) to compute (from~$\vec{u}$) the values $s_i = e_i(T) \in \F_q$ for each $i \in [k]$, which are the first $k \leq h = \card{T}$ elementary symmetric polynomials of any such~$T$.

    \item Finally, construct the polynomial
    \[ r(x) = \sum_{i \in [k]} (-1)^i \cdot s_i \cdot x^{h-i} \in \F_q[x] \]
    and output the vector $\vec{r} = (-r(s))_{s \in S} \in \F_q^S$ obtained by evaluating~$-r(x)$ at each element of~$S$.

\end{enumerate}

We now show that the algorithm is correct.
Let $T \subseteq S$ be the subset indicated by a binary vector $\vec{x}_T \in (\vec{x}+\lat) \cap B_{S,h}$, and let $f_T(x) = \prod_{t \in T} (x-t) \in \F_q[x]$ be the root polynomial of $T$.
As argued above in \cref{item:elesymm-from-newton}, $s_i = e_i(T)$ for each $i \in [k]$, so by \cref{eq:root-polynomial}, we have
\[ f_T(x) = r(x) + g_T(x) \]
for some $g_T(x)$ where $\deg(g_T) \leq h-k$ (or~$g_T$ is identically zero).
Because $f_T(t) = 0$ for every $t \in T$, we have $g_T(t) = -r(t)$ for every $t \in T$, i.e., the Reed-Solomon codeword in $\text{RS}[h-k+1, S]$ defined by~$g_T$ agrees with~$\vec{r}$ in at least~$\card{T}=h$ positions.
Finally, the claim follows by noting that each distinct binary vector $\vec{x}_T \in (\vec{x}+\lat) \cap B_{S,h}$ yields a distinct subset~$T \subseteq S$, which yields a distinct root polynomial~$f_T$ and thereby a distinct polynomial $g_T = f_T - r$.
\end{proof}

For an analog of \cref{lem:coset-sampling}, we want an explicit (i.e., deterministically and efficiently constructible) lattice $\lat = \latpar(H)$ and shift~$\vec{x}$ such that $\card{(\vec{x}+\lat) \cap B_{S,h}}$ is super-polynomial (in~$\card{S}$), for some Hamming weight $h \leq 2k \leq \lambda_1^{(1)}(\lat)$.
By \cref{lem:binary-rs-list}, this implies an explicit received word and Reed-Solomon code of dimension $h-k+1$ for which super-polynomially many codewords agree with the received word in at least~$h$ coordinates.
This is an agreement-to-dimension ratio of $h/(h-k+1) = 1 + (k-1)/(h-k+1) \geq 2 - O(1/k)$.
However, the state of the art for explicit Reed-Solomon list-decoding configurations~\cite[Corollary~2]{DBLP:journals/tit/GuruswamiR06} requires a ratio of $2-\Omega(1)$ in order to get a super-polynomial list size.

\begin{remark}
It is not clear whether a useful converse of \cref{lem:binary-rs-list} holds, i.e., whether an explicit `bad' list-decoding configuration for Reed-Solomon
codes would yield a corresponding explicit locally dense lattice.
This is because the difference between the received word and a nearby codeword may not correspond to (the evaluations of) the root polynomial of the positions at which they agree, but is only divisible by it.
\end{remark}

\begin{remark}
\label{rem:generalize-rs-list}
Everything in the proof of \cref{lem:binary-rs-list} generalizes from binary vectors to \emph{non-negative} integer vectors of~$\ell_1$ norm~$h$ and their corresponding \emph{multisets}~$T$, except for the final step:
the codeword corresponding to $g_T(x)$ is guaranteed to agree with~$\vec{r}$ only at each \emph{distinct} $t \in T$, and the number of these is only the \emph{Hamming} weight (not the~$\ell_1$ norm~$h$) of the integer vector.
It appears that this gap can be overcome (restoring agreement~$h$) by instead using univariate \emph{multiplicity codes}, which are generalizations of Reed-Solomon where the coordinates correspond to `multiple evaluations' of a polynomial \emph{and its derivatives}; see~\cite{Kopparty14-Multiplicity}.
For such codes, we do not know whether any explicit list-decoding configurations with substantially different parameters from Reed-Solomon codes are known.

By contrast, \cref{lem:binary-rs-list} does not seem to generalize easily to \emph{arbitrary} integer vectors that may have negative entries (e.g., ternary $\set{0,\pm 1}$-vectors of a given Hamming weight).
So, considering negative entries may provide a route to constructing locally dense lattices without having to overcome the Reed-Solomon list-decoding barriers.
\end{remark}

\subsection{A Fourier-Analytic Approach}
\label{sec:fourier}

We next outline an approach for using Fourier analysis over $\F_q^k$ for prime~$q$ and $k \ll q$ to show that an explicit coset $\vec{x} + \latpar(H)$ for $H = H_q(k, \F_q))$ contains many vectors of $\ell_1$ norm at most $h \leq \alpha \cdot (2 k) \leq \alpha \cdot \lambda_1^{(1)}(\latpar(H))$ for some integer $h$, where $\alpha \in (0, 1)$ is a constant as above.

The approach uses techniques similar to those used by Cheng and Wan~\cite{conf/focs/ChengW04} and Guruswami and Rudra~\cite{DBLP:journals/tit/GuruswamiR06} for constructing explicit Hamming balls containing many Reed-Solomon codewords (i.e., for constructing explicit ``bad'' configurations for list-decoding Reed-Solomon codes), and by Cheng and Wan~\cite{journals/tit/ChengW12} for showing deterministic hardness of the minimum distance problem on codes.

Specifically, we perform a Fourier-analytic calculation similar to ones appearing in those works in an attempt to lower bound the number of short vectors $\vec{x} \in \F_q^q$ satisfying $H\vec{x} = \vec{s}$ for some syndrome $\vec{s} \in \F_q^k$ (which is equivalent to lower bounding the number of short vectors in the lattice coset $\latpar_{\vec{s}}(H)$). 
The calculation starts by establishing a correspondence between polynomials in $\F_q[x]$ of degree less than~$k$ and the Fourier coefficients of a certain indicator function~$1_W$. It then uses a combination of the Weil bound (\cref{eq:weil-additive}) and the triangle inequality to upper bound the total contribution of the terms corresponding to non-constant polynomials in the Fourier expansion of the $h$-fold convolution $1_W^{(\ast h)}$ of~$1_W$, which essentially counts the number of non-negative integer vectors of~$\ell_1$ norm~$h$ with a given syndrome (see \cref{eq:weil-triangle-argument}).

Unfortunately, this calculation (just barely) fails to show anything useful for our purposes. The rough idea of why is as follows.
The argument seeks to show that the sum of magnitudes ($\ell_1$ norm) of all the Fourier coefficients of $1_W^{(\ast h)}$ corresponding to non-constant polynomials is less than the zero coefficient of $1_W^{(\ast h)}$ (which is real-valued, positive, and corresponds to the zero polynomial).
The Weil bound shows that there is a multiplicative gap of about $\sqrt{q}$ between the value of the zero coefficient and the magnitude of any non-constant coefficient of~$1_W$, and this gap is amplified to about $q^{h/2}$ for the corresponding Fourier coefficients of $1_W^{(\ast h)}$.
However, because there are about~$q^k$ non-constant polynomials in $\F_q[x]$ of degree less than~$k$, we need to take $h \geq 2k$ in order for the zero coefficient of $1_W^{(\ast h)}$ to be large enough, which conflicts with the requirement that $h < 2k$.

Because the calculation is not directly useful for our purposes, and due to its similarity to those in prior work, we defer it to \cref{sec:app-fourier}.
There we also discuss some approaches for making the argument work.

\subsection{Point-Counting Proxies}
\label{sec:point-counting}

In this section we outline a possible approach toward proving local density for a deterministic version of our construction, using ``smooth'' functions that serve as proxies for point-counting functions.
The hope is to use these functions to prove a lower bound on the number of sufficiently short points in \emph{any} coset $\vec{x}+\latpar(H)$ meeting some easy-to-ensure conditions.
The approach is centered on analytical techniques originally due to~\cite{mazo90:_lattic_point_high_dimen_spher,elkies91:_packing_densities}, which were successfully used on the integer lattice $\Z^n$ for complexity-theoretic purposes in~\cite{conf/stoc/AggarwalS18,conf/coco/BennettP20}.
In the present context of Reed-Solomon lattices, our attempts have been inconclusive: while we did not manage to prove what we are seeking, we also do not see any inherent barrier, and further effort may lead to a favorable outcome.

For any $p \geq 1$, $\ell_p$ norm bound $r \geq 0$, and lattice coset $\vec{x}+\lat \subset \R^n$, define
\begin{equation}
    N_p(r; \vec{x}+\lat) := \card{(\vec{x}+\lat) \cap \B_p^n(r)}
\end{equation}
to be the number of points in $\vec{x}+\lat$ of~$\ell_p$ norm at most~$r$.
Similarly, for any $\tau > 0$, define the function
\begin{equation}
    \label{eq:Theta_p}
    \Theta_p(\tau; \vec{x}+\lat) := \sum_{\vec{v} \in \vec{x}+\lat} \exp(-\tau \norm{\vec{v}}_p^p) \ \text{.}
\end{equation}
In what follows, for brevity we often omit the coset $\vec{x}+\lat$ when it is fixed and clear from context.

First, for any $\tau > 0$ we immediately have
\begin{equation}
    \label{eq:N_p-upper}
    N_p(r) \leq \exp(\tau \cdot r^p) \cdot \Theta_p(\tau) \ \text{,}
\end{equation}
because 
each point counted by $N_p(r)$ contributes at least $\exp(-\tau \cdot r^p)$ to $\Theta_p(\tau)$, and all other points contribute positively.
The work of~\cite{mazo90:_lattic_point_high_dimen_spher,elkies91:_packing_densities} showed that for any coset of the \emph{integer} lattice $\lat = \Z^n$, the upper bound from \cref{eq:N_p-upper} is fairly tight (up to $\exp(O(\sqrt{n}))$ factors) for suitably chosen~$\tau$, namely, the one for which $\mu_p(\tau; \vec{x}+\lat) = r^p$, where
\begin{equation}
    \mu_p(\tau; \vec{x}+\lat) := \E_{\vec{v} \sim D_p(\tau; \vec{x}+\lat)} \bracks*{\norm{\vec{v}}_p^p}
\end{equation}
and $D_p(\tau)$ is the probability distribution supported on $\vec{x}+\lat$ that assigns probability $\exp(-\tau \norm{\vec{v}}_p^p)/\Theta_p(\tau)$ to each~$\vec{v}$.
In other words, $\mu_p(\tau)$ is the $p$th moment of the~$\ell_p$ norm of a sample from $D_p(\tau)$, and when~$\tau$ is chosen to make this moment equal~$r^p$, \cref{eq:N_p-upper} is fairly tight for any coset of the integer lattice.

Optimistically, if we could show that \cref{eq:N_p-upper} is fairly tight for suitable~$\tau$ (as a function of~$r$) and every coset of our special lattices, then we would have
\[ N_p(r) \approx \exp(\tau \cdot r^p) \cdot \Theta_p(\tau) = \exp(\tau \cdot \alpha^p \cdot 2k) \cdot \Theta_p(\tau) \ \text{,} \]
where $r = \alpha \cdot (2k)^{1/p} \leq \alpha \cdot \lambda_1^{(p)}(\latpar(H))$ for some constant $\alpha < 1$ is our distance of interest for local density, and~$\tau$ is determined by this~$r$.
Furthermore, if we could also show that, say, $\Theta_p(\tau) \geq \exp(-\tau k)$ for this~$\tau$ and a suitable choice of coset (e.g., by shifting the lattice by some~$\vec{x}$ with $\norm{\vec{x}}_p^p=k$), then for any constant $\alpha > 1/2^{1/p}$ we could conclude that there are $\exp(\Omega(\tau k))$ sufficiently short points in the coset.
Finally, if we could show that $\tau \geq k^{c-1}$ for some constant $c > 0$, this would yield a subexponential number of short coset vectors, as needed.

\paragraph{An approach.}

Toward this end, we observe that several parts of the analysis from~\cite{mazo90:_lattic_point_high_dimen_spher,elkies91:_packing_densities} immediately generalize to cosets of arbitrary lattices, not just the integer lattice~$\Z^n$.
In particular, for any lattice coset and any $\tau > 0$, by a routine calculation we have
\begin{align}
    \frac{\partial}{\partial \tau} \ln \Theta_p(\tau) &= -\mu_p(\tau) < 0 \ \text{,}
    \label{eq:partial-Theta_p} \\
    \frac{\partial^2}{\partial \tau^2} \ln \Theta_p(\tau) &=
    \E_{\vec{v} \sim D_p(\tau)} \bracks*{\norm{\vec{v}}_p^{2p}} - \mu_p(\tau)^2 > 0 \ \text{.} \label{eq:partial-partial-Theta_p}
\end{align}
In words, $\ln \Theta_p(\tau)$ is decreasing and concave up, and its second partial derivative (with respect to~$\tau$) is the variance of $\norm{\vec{v}}_p^p$ when~$\vec{v}$ is sampled from $D_p(\tau)$.
Moreover, by an elegant (and non-obvious) argument, for any $\tau, \delta > 0$ we have the lower bound
\begin{equation}
    \label{eq:N_p-lower}
    N_p(\mu_p(\tau)^{1/p}) \geq \exp((\tau+\delta) \cdot \mu_p(\tau + 2\delta)) \cdot H_p(\tau, \delta) \ \text{,}
\end{equation}
where
\begin{equation}
    H_p(\tau, \delta) := \Theta_p(\tau+\delta) - \exp(-\delta \mu_p(\tau)) \cdot \Theta_p(\tau) - \exp(\delta \mu_p(\tau+2\delta)) \cdot \Theta_p(\tau+2\delta) \ \text{.}
\end{equation}

Using the Taylor series for $\ln \Theta_p(\tau)$ around~$\tau$ and
\cref{eq:partial-Theta_p}, for any $\delta > 0$ we have
\[
    \ln \Theta_p(\tau) \leq \ln \Theta_p(\tau+\delta) + \delta \mu_p(\tau) - \frac{\delta^2}{2} \inf_{\tau' \in [\tau,\tau+\delta]} \frac{\partial^2}{\partial \tau^2} \ln \Theta_p(\tau') \ \text{,}
\]
and similarly for $\ln \Theta_p(\tau+2\delta)$.
So, if we could suitably lower-bound the above infimum on the second partial derivative (which, to recall from \cref{eq:partial-partial-Theta_p}, is positive) for appropriate~$\delta$, by taking $\exp(\cdot)$ of both sides and rearranging we could show that, say, $H_p(\tau,\delta) \geq \Theta_p(\tau+\delta)/2$.
The work of~\cite{mazo90:_lattic_point_high_dimen_spher,elkies91:_packing_densities} actually does this for the integer lattice~$\Z^n$, by exploiting its product structure to show that the infimum is proportional to~$n$, but here the path forward is less clear.

If the above could be achieved, then we would get the lower bound
\[ N_p(\mu_p(\tau)^{1/p})
\geq \exp((\tau+\delta) \cdot \mu_p(\tau+2\delta)) \cdot \Theta_p(\tau+\delta)/2
\approx \exp(\tau \cdot \mu_p(\tau)) \cdot \Theta_p(\tau)/2 \ \text{,} \]
showing that \cref{eq:N_p-upper} is fairly tight when $r=\mu_p(\tau)^{1/p}$.
As explained above, the final goal would then be to identify an explicit coset for which the right-hand side is at least subexponentially large in the lattice dimension, when~$\tau$ is set so that $\mu_p(\tau) = \alpha^p \cdot (2k)$ for some constant $\alpha < 1$.

\bibliography{svphard}
\bibliographystyle{alphaabbrvprelim}

\appendix
\section{Proof of \texorpdfstring{\cref{thm:cvp-to-svp}}{CVP to SVP reduction theorem}}
\label{sec:cvp-svp-reduction-proof}

Here we restate and prove \cref{thm:cvp-to-svp}.

\cvpsvp*

\begin{proof}
Let $B \in \Z^{d \times r}$, $\vec{t} \in \Z^d$, $s > 0$ be the input instance of $\gamma$-$\GapCVP_p'$.
Assume without loss of generality that $\vec{t} \neq \vec{0}$ (if $\vec{t} = \vec{0}$, simply output an arbitrary YES instance of $\gamma'$-$\GapSVP_p$).
The reduction outputs $(B', s')$, where
\begin{align*}
B' &:= \begin{pmatrix}
BTA &  BT \vec{x} - \vec{t} \\
\beta A & \beta \vec{x}
\end{pmatrix}
\text{ with }
\beta := \frac{\gamma' s}{\ell^{1/p} \cdot (1 - (\alpha \gamma')^p)^{1/p}} \neq 0
\ \text{,} \\
s' &:= (s^p + \alpha^p \beta^p \ell)^{1/p} \ \text{.}
\end{align*}

It is clear that the reduction runs in polynomial time.
To show correctness, we start by showing that the columns of $B'$ are linearly independent, i.e., that $B'$ is a lattice basis, and hence the lattice it generates has rank $R+1$.
Recall that $A \in \Z^{n \times R}$ (for some~$n$) is a lattice basis, so it has linearly independent columns.
Let $\vec{u} \in \R^{R + 1} \setminus \set{\vec{0}}$.
If $(\beta A \mid \beta \vec{x}) \cdot \vec{u} \neq \vec{0}$ then $B' \vec{u} \neq \vec{0}$.
Otherwise, $(A \mid \vec{x}) \cdot \vec{u} = \vec{0}$ and so $u_{n+1} \neq 0$ by linear independence of $A$'s columns. Then
\[ (B T A \mid B T \vec{x} - \vec{t}) \cdot \vec{u} = BT (A \mid \vec{x}) \cdot \vec{u} - u_{n+1} \cdot \vec{t} = -u_{n+1} \cdot \vec{t} \neq \vec{0} \ \text{,} \] because $\vec{t} \neq \vec{0}$ by assumption. Hence $B' \vec{u} \neq \vec{0}$, as needed.

We now show the main claim about $(B', s')$. %
First, suppose that the input $\GapCVP'_p$ instance $(B, \vec{t}, s)$ is a YES instance.
Then by \cref{def:gapcvp-prime} there exists an $\vec{c} \in \bit^r$ such that $\norm{B\vec{c} - \vec{t}}_p \leq s$.
Moreover, by \cref{item:cover-hypercube} of \cref{def:locally-dense}, there exists $\vec{v} = \vec{x} + A\vec{z} \in \vec{x} + \lat(A)$ for some $\vec{z} \in \Z^{R}$ such that $\norm{\vec{v}}_p^p \leq \alpha^p \cdot \ell$ and $T\vec{v} = \vec{c}$. For such a $\vec{z}$, we have $T(\vec{x} + A\vec{z}) = T\vec{v} = \vec{c}$
and therefore
\begin{align*}
\lambda_1^{(p)}(\lat(B'))^p &\leq \norm{B'\cdot (\vec{z}, 1)}_p^p \\
&= \norm{BTA\vec{z} + BT\vec{x} - \vec{t}}_p^p + \beta^p \norm{A\vec{z} + \vec{x}}_p^p \\
&= \norm{B\vec{c} - \vec{t}}_p^p + \beta^p \norm{A\vec{z} + \vec{x}}_p^p \\
& \leq s^p + \alpha^p \beta^p \ell \\
&= (s')^p \ \text{,}
\end{align*}
as needed.

Next, suppose that the input $\GapCVP'_p$ instance is a NO instance.
We will show that $\lambda_1^{(p)}(\lat(B')) \geq \gamma' s'$ by showing that $\norm{B' \cdot (\vec{z}, w)}_p^p > (\gamma' s')^p$ for all $(\vec{z}, w) \in \Z^{R + 1} \setminus \set{\vec{0}}$, separately analyzing the cases $w = 0$ and $w \neq 0$.
When $w = 0$, we have that $\vec{z} \neq \vec{0}$, and so by \cref{item:min-dist} of \cref{def:locally-dense},
\[ \norm{B' \cdot (\vec{z}, w)}_p^p
\geq \beta^p \cdot \lambda_1^{(p)}(\lat(A))^p
> \beta^p \cdot \ell \ \text{.} \]
And, since $\beta^p \cdot \ell \cdot (1-(\alpha \gamma')^p) = (\gamma' s)^p$ by definition, we have (by definition of~$s'$)
\[ \beta^p \cdot \ell \geq (\gamma' s)^p + (\alpha \beta \gamma')^p \cdot \ell = (\gamma' s')^p \ \text{.} \]

When $w \neq 0$, by the definition of $\GapCVP'_p$ NO instances, we have
\[ \norm{B' \cdot (\vec{v}, w)}_p^p \geq \dist_p(w \vec{t}, \lat(B))^p > (\gamma s)^p \ \text{.} \]
Then, by the definitions of~$\gamma$,~$\beta$, and~$s'$, we have 
\[ (\gamma s)^p \geq (\gamma')^p \cdot \frac{s^p}{1-(\alpha \gamma')^p}
= (\gamma')^p \cdot \parens*{s^p + \frac{(\alpha \gamma' s)^p}{1-(\alpha \gamma')^p}}
= (\gamma')^p \cdot (s^p + \alpha^p \beta^p \ell)
= (\gamma' s')^p \ \text{.} \] 
The claim follows.
\end{proof}
\section{A Fourier-Analytic Approach to Derandomization}
\label{sec:app-fourier}

In this appendix we perform and discuss the Fourier-analytic calculation described in \cref{sec:fourier}.
We introduce definitions and basic facts briefly, and refer the reader to \cite{KowalskiSumsFiniteFields} and \cite[Chapter 8]{odonnell/boolean} for details.
The \emph{additive Fourier characters} of $\F_q^k$ are the homomorphisms from the additive group $\Z_q^k$ of $\F_q^k$ to the multiplicative group of $\mathbb{C}$. Concretely, they are the $q^k$ functions $\Psi_{\vec{u}}(\vec{x}) := \omega^{\iprod{\vec{u}, \vec{x}}}$
indexed by $\vec{u} \in \F_q^k$, where $\omega$ is an arbitrary fixed $q$th primitive root of unity (e.g., $\omega = \exp(2 \pi i / q)$).
For a function $f: \F_q^k \to \mathbb{C}$, its $\vec{u}$th Fourier coefficient $\fh(\vec{u})$ for $\vec{u} \in \F_q^k$ is defined as
\[
\fh(\vec{u}) := \E_{\vec{b} \sim \F_q^k} [f(\vec{b}) \overline{\Psi_{\vec{u}}(\vec{b})}] \ \text{,}
\]
and the Fourier expansion of $f$ is
\[
f(\vec{x}) = \sum_{\vec{u} \in \F_q^k} \fh(\vec{u}) \Psi_{\vec{u}}(\vec{x}) \ \text{.}
\]

Let $W = W(k, q) := \set{(1, a, a^2, \ldots, a^{k - 1})^T : a \in \F_q} \subseteq \F_q^k$ denote the set of columns of the matrix $H_q(k, \F_q)$,
and let $1_W : \F_q^k \to \bit$ denote the indicator function of~$W$.
The Fourier coefficients of~$1_W$ satisfy
\[
\widehat{1_W}(\vec{u}) = \E_{\vec{b} \sim \F_q^k} [1_W(\vec{b}) \overline{\Psi_{\vec{u}}(\vec{b})}] 
= \frac{1}{q^k} \sum_{\vec{b} \in W} \overline{\Psi_{\vec{u}}(\vec{b})}
= \frac{1}{q^k} \sum_{a \in \F_q} \Psi(-p_{\vec{u}}(a)) \ \text{,}
\]
where $\Psi(x) = \omega^x$ is an additive character of $\F_q$ and $p_{\vec{u}}(x) = \sum_{i=0}^{k-1} u_i x^i \in \F_q[x]$ is a polynomial of degree less than $k \leq q$.%
\footnote{Note that formally, we are indexing the Fourier coefficients by $\F_q^{[k]}$, i.e., the coordinates of $\vec{u} \in \F_q^{[k]}$ are indexed from zero, by the elements of $[k] = \set{0, \ldots, k - 1}$.}
For non-constant polynomials $p_{\vec{u}}(x)$, the Weil bound for additive character sums (see~\cite[Theorem 3.2]{KowalskiSumsFiniteFields}) asserts that 
\begin{equation}
\label{eq:weil-additive}
q^k \cdot \abs{\widehat{1_W}(\vec{u})}
= \abs[\Big]{\sum_{a \in \F_q} \Psi(-p_{\vec{u}}(a))}
\leq (k - 2) \sqrt{q} \ \text{.}
\end{equation}

The \emph{convolution} of two functions $f, g : \F_q^k \to \mathbb{C}$ is the function defined as $(f \ast g)(\vec{x}) := \E_{\vec{y} \sim \F_q^k}[f(\vec{x} - \vec{y})g(\vec{y})]$.
Importantly, Fourier coefficients are multiplicative under convolution:
\[ \widehat{(f \ast g)}(\vec{u}) = \fh(\vec{u})\gh(\vec{u}) \ \text{.} \]
Furthermore, the $h$-fold convolution of $1_W(\vec{x})$ with itself, denoted $1_W^{(\ast h)}(\vec{x})$, counts, up to a normalization factor of~$q^{hk}$, the number of sequences $(\vec{w}_1, \ldots, \vec{w}_h) \in W^h$ such that $\sum_{i = 1}^h \vec{w}_i = \vec{x}$.%
\footnote{\label{foot:sequences-vs-multisets} Technically, we care about the number of \emph{multisets} (rather than sequences) of vectors $V \subseteq W$, $\card{V} = h$ such that $\sum_{\vec{w} \in V} \vec{w} = \vec{x}$. Indeed, there is a bijection between such multisets and their indicator vectors $\vec{y}' \in (\Z^{\geq 0})^q$, which are such that $\norm{\vec{y}'}_1 = h$ and $H\vec{y}' = \vec{x}$. However, the number of such multisets is at most an $h!$ factor less than the number of such sequences, and this factor seems inconsequential for our purposes; see \cref{eq:weil-triangle-argument} and the subsequent discussion.}

Let $U_0 := \set{(u_0, 0, \ldots, 0) : u_0 \in \F_q}$ denote the set of coefficient vectors corresponding to constant polynomials, i.e., corresponding to polynomials $p_{\vec{u}}(x) = u_0$ for $u_0 \in \F_q$.
Putting everything together, we get that
\begin{align*}
\frac{1}{q^{hk}} \cdot \abs[\Big]{\set{(\vec{w}_1, \ldots, \vec{w}_h) \in W^h : \sum_{i=1}^h \vec{w}_i = \vec{x}}}
&= 1_W^{(\ast h)}(\vec{x}) \\
&= \sum_{\vec{u} \in \F_q^k} \widehat{1_W^{(\ast h)}}(\vec{u}) \cdot \Psi_{\vec{u}}(\vec{x}) \\
&= \sum_{\vec{u} \in \F_q^k} (\widehat{1_W}(\vec{u}))^h \cdot \Psi_{\vec{u}}(\vec{x}) \\
&= \sum_{\vec{u} \in U_0} (\widehat{1_W}(\vec{u}))^h \cdot \Psi_{\vec{u}}(\vec{x}) + \sum_{\vec{u} \in \F_q^k \setminus U_0} (\widehat{1_W}(\vec{u}))^h \cdot \Psi_{\vec{u}}(\vec{x}) \\
&= \frac{1}{q^{hk}} \sum_{u_0 \in \F_q} (q \cdot \Psi(-u_0))^h \cdot \Psi(u_0 x_0) 
+ \sum_{\vec{u} \in \F_q^k \setminus U_0} (\widehat{1_W}(\vec{u}))^h \cdot \Psi_{\vec{u}}(\vec{x}) \\
&= \frac{1}{q^{h (k-1)}} \sum_{u_0 \in \F_q} \Psi(u_0 (x_0 - h)) 
+ \sum_{\vec{u} \in \F_q^k \setminus U_0} (\widehat{1_W}(\vec{u}))^h \cdot \Psi_{\vec{u}}(\vec{x}) \ \text{.}
\end{align*}
Now let $\vec{s} \in \F_q^k$ be such that $s_0 = h \bmod q$.
Multiplying the above equality by~$q^{hk}$, we have
\begin{align}
\label{eq:weil-triangle-argument}
\begin{split}
\Big|\set{(\vec{w}_1, \ldots, \vec{w}_h) \in W^h : \sum_{i=1}^h \vec{w}_i = \vec{s}}\Big| &= 
q^{h + 1} + q^{hk} \sum_{\vec{u} \in \F_q^k \setminus U_0} (\widehat{1_W}(\vec{u}))^h \cdot \Psi_{\vec{u}}(\vec{s}) \\
& \geq q^{h + 1} - q^{hk} \sum_{\vec{u} \in \F_q^k \setminus U_0} \Big| (\widehat{1_W}(\vec{u}))^h \cdot \Psi_{\vec{u}}(\vec{s}) \Big| \\
& \geq q^{h + 1} - q^k \cdot ((k - 2) \sqrt{q})^h \ \text{,}
\end{split}
\end{align}
where the equality uses the fact that $s_0 = h \bmod{q}$, the first inequality uses the triangle inequality, and the second inequality uses the Weil bound (\cref{eq:weil-additive}).
Taking $k = q^{\eps'}$ for some constant $\eps' > 0$, the last bound is roughly $q^{h} - q^k \cdot q^{(1/2 + \eps')h} = q^{h} - q^{k + (1/2 + \eps')h}$, which we want to be subexponentially large in $q$.

However, in order to even make the above quantity positive, we must take $h > k/(1/2 - \eps') = (2 + \eps)k$ for some constant $\eps > 0$, whereas our application needs $h < 2k$ (since we know only that $\lambda_1^{(1)}(\latpar(H)) \geq 2k$).
Indeed, this same condition shows up in related work: our requirement that $h > (2 + \eps)k$ is essentially the same as the requirement that ``$g \geq (2 + \eps) (h + 1)$'' in the premise of \cite[Theorem 2.1]{journals/tit/ChengW12}, even though the setting there is slightly different.
In particular, in each case the ``$2$'' in the $2 + \eps$ factor comes from (the inverse of) the exponent in the Weil bound's $q^{1/2}$ factor.

\paragraph{Toward making the approach work.}

The above argument `just barely' fails---i.e., improving the constant from $2+\eps$ to $2-\eps$ would suffice for our purposes---so one might hope to make it work by showing that one of the inequalities in \cref{eq:weil-triangle-argument} is loose.
First, one might hope to strengthen the Weil bound by reducing its~$\sqrt{q}$ factor to $q^{1/2 - \delta}$ for some constant $\delta > 0$.
Furthermore, the Weil bound is a ``worst-case'' statement about character sums evaluated on \emph{arbitrary} polynomials, whereas for the argument above an improvement to $q^{1/2 - \delta}$ on average (i.e., for \emph{random} polynomials) would suffice. Unfortunately, as shown in~\cite{Sawin/Weil}, the~$\sqrt{q}$ factor is optimal even for such an ``average-case Weil bound.''
So, the main possibility for improvement is to avoid na\"{i}vely using the triangle inequality in the first inequality.
Indeed, there could be substantial ``phase cancellations'' among the terms in the sum $\sum_{\vec{u} \in \F_q^k \setminus U_0} (\widehat{1_W}(\vec{u}))^h \cdot \Psi_{\vec{u}}(\vec{s})$ for some special (or perhaps arbitrary) syndromes $\vec{s}$.
It seems likely that such cancellations occur, but we do not know how to show it.
(We emphasize that \cref{eq:weil-triangle-argument} holds for \emph{any}~$\vec{s}$ with $s_0=h \bmod q$; it is agnostic to the rest of~$\vec{s}$.) 

Finally, we note another possible modification to the argument, which is to work with the ``symmetrized set'' $W' := \pm W = \set{\pm (1, a, a^2, \ldots, a^{k-1})^T : a \in \F_q}$ instead of~$W$.
A multiset of~$h$ vectors from~$W'$ naturally corresponds to a ``signed'' vector $\vec{y}' \in \Z^q$, whereas such a multiset of vectors in~$W$ corresponds to an ``unsigned'' vector $\vec{y} \in (\Z^{\geq 0})^q$.
That is, we can count coset vectors having negative coordinates when working with~$W'$ instead of~$W$.
However, it is unclear to us how to use this relaxation to any advantage.

\end{document}